\theoremstyle{plain}
\newtheorem{theorem}{Theorem}
\newtheorem{proposition}[theorem]{Proposition}
\newtheorem{lemma}[theorem]{Lemma}
\theoremstyle{definition}
\newcommand{\R}{\mathbb{R}}
\newcommand{\C}{\mathbb{C}}
\newcommand{\cF}{\mathcal{F}}
\newcommand{\cL}{\mathcal{L}}         
\newcommand{\cP}{\mathcal{P}}         
\newcommand{\cS}{\mathcal{S}}
\newcommand{\E}{\mathcal{E}}
\newcommand{\LAT}{\mathcal{L}}
\newcommand{\Per}{\mathcal{P}}
\newcommand{\nc}{\newcommand}
\nc{\be}{\begin{equation}}
\nc{\la}{\label}
\nc{\ba}{\begin{array}}
\nc{\ea}{\end{array}}
\nc{\bs}{\begin{split}}
\nc{\es}{\end{split}}
\nc{\zb}{\underbar{z}}
\nc{\pb}{\underbar{p}}
\nc{\bb}{\underbar{b}}
\nc{\zbt}{\underbar{z}(t)}
\nc{\dA}{\nabla_A}
\nc{\nj}{(n_j)}
\nc{\nk}{(n_k)}
\nc{\nl}{(n_l)}
\nc{\nr}{(n_r)}
\nc{\nt}{(n_t)}
\nc{\lam}{\lambda}
\nc{\G}{\Gamma}
\nc{\g}{\gamma}
\nc{\al}{\alpha}
\nc{\del}{\delta}
\nc{\Om}{\Omega}
\nc{\Omt}{\tilde{\Omega}}
\nc{\ta}{\tau}
\nc{\w}{\omega}
\nc{\io}{\iota}
\nc{\h}{\theta}
\nc{\z}{\zeta}
\nc{\s}{\sigma}
\nc{\Si}{\Sigma}
\nc{\Lam}{\Lambda}
\nc{\bP}{\bar{P}}
\nc{\bQ}{\bar{Q}}
\nc{\bL}{\bar{L}}
\nc{\ran}{\rangle}
\nc{\lan}{\langle}
\nc{\bfone}{{\bf 1}}
\newcommand{\one}{\mathbf{1}}
\newcommand{\NULL}{\operatorname{null}}
\newcommand{\Ran}{\operatorname{Ran}}
\newcommand{\p}{\partial}
\newcommand{\e}{\epsilon}
\newcommand{\ra}{\rightarrow}
\newcommand{\diag}{\operatorname{diag}}
\renewcommand{\Re}{\operatorname{Re}}
\renewcommand{\Im}{\operatorname{Im}}
\newcommand{\CURL}{\operatorname{curl}}
\newcommand{\divv}{\operatorname{div}}
\newcommand{\DIV}{\operatorname{div}}
\newcommand{\n}{\nabla}
\newcommand{\COVGRAD}[1]{\nabla_{\!\!#1}}
\newcommand{\COVLAP}[1]{\Delta_{\!#1}}
\newcommand{\const}{\operatorname{const}}
\newcommand{\LA}[2]{\vec{\mathscr{L}}_{}^{}(\tau)}
\newcommand{\HA}[2]{\vec{\mathscr{H}}_{}^{}(\tau)}
\newcommand{\DETAILS}[1]{}
\begin{document}


\title
{Stability of Abrikosov lattices under gauge-periodic perturbations}

\author{Israel Michael Sigal
\thanks{ Dept.~of Math.,
Univ. of Toronto, Toronto, Canada; Supported by NSERC Grant No. NA7901}\ \qquad \qquad
Tim Tzaneteas
\thanks{Dept.~of Math.,
Univ. of Toronto, Toronto, Canada; Supported by NSERC Grant No. NA7901} \\ 
}

\date{}
\maketitle

\begin{abstract}
  We consider Abrikosov-type vortex lattice solutions of the Ginzburg-Landau equations of superconductivity, consisting of single vortices, for magnetic fields below but close to the second critical magnetic field $H_{c2}=\kappa^2$ and for superconductors filling the entire $\R^2$. Here $\kappa$ is the Ginzburg-Landau parameter.  The lattice shape, parameterized by $\tau$, is allowed to be arbitrary (not just triangular or rectangular).
 Within the context of the time-dependent Ginzburg-Landau equations, called the Gorkov-Eliashberg-Schmidt equations, we  prove that such lattices are asymptotically stable under gauge periodic perturbations for  $\kappa^2 > \frac{1}{2}(1 - \frac{1}{\beta(\tau)})$  and unstable for $\kappa^2 < \frac{1}{2}(1 - \frac{1}{\beta(\tau)})$, where $\beta(\tau)$ is the Abrikosov constant depending on the lattice shape $\tau$.  This result goes against the common belief among physicists and mathematicians that  Abrikosov-type vortex lattice solutions are stable only for triangular lattices and $\kappa^2 > \frac{1}{2}$. (There is no real contradiction though as we consider very special perturbations.)
\end{abstract}



\section{Introduction}\label{sec:introduction}

\subsection{Background}\label{subsec:backgr}

Macroscopic theory of superconductivity, by now a classical theory presented in any book on superconductivity and solid state or condensed matter physics (see e.g. \cite{Tink, Kit}),  was developed by Ginzburg and Landau along the lines of the Landau theory of the second order phase transitions and before the microscopic theory was discoverd.  At the foundation of this theory lie  the Ginzburg-Landau equations for the order parameter and magnetic potential. The time-dependent generalization of these equations was proposed by Schmidt (\cite{Sch}) and Gorkov and Eliashberg (\cite{GE}) and are known as  the Gorkov-Eliashberg or  Gorkov-Eliashberg-Schmidt equations (as well as  the time-dependent Ginzburg-Landau equations).  The latter equations have much narrower range of applicability than  the Ginzburg-Landau equations (\cite{Tink}) and many refinements of them were proposed, but even a slight improvement of these equations is, at least notationally, extremely cumbersome.

By far, the most important and celebrated solutions of  the Ginzburg-Landau equations of superconductivity are vortices and vortex lattices, discovered by Abrikosov (\cite{Abr}), and  known as Abrikosov (vortex) lattice solutions. Among other things understanding these solutions is important for maintaining the superconducting current in Type II superconductors, i.e. for $\kappa> \frac{1}{\sqrt{2}}$, in our units.

Abrikosov lattice solutions are extensively studied in physics literature. Among many rigorous results, we mention that the existence of these solutions was proven rigorously in  \cite{Odeh, BGT, Dut, Al, as, TS} (\cite{Odeh, BGT, Al} deal with triangular and rectangular lattice, while \cite{Dut, as, TS}, with lattices of arbitrary shape). Moreover,  important and fairly detailed results on asymptotic behaviour of solutions, for $\kappa \to\infty$ and the applied magnetic fields, $h$, satisfying $h\le \frac{1}{2}\log \kappa+\const$ (the London limit), were obtained in \cite{as} (see this paper and the book \cite{ss} for references to earlier works). Further extensions to the Ginzburg-Landau equations for anisotropic and high temperature superconductors can be found in \cite{abs1, abs2}.

  In this paper we address the problem of stability of the  Abrikosov vortex lattice solutions 
  within the framework of the time-dependent Ginzburg-Landau equations - the Gorkov-Eliashberg-Schmidt equations. We consider such solutions for lattices of arbitrary shape (in the extensive literature on the subject such solutions are considered only for triangular or rectangular) and for magnetic fields smaller than but close to the second critical magnetic field $H_{c2}=\kappa^2$ (the other case of magnetic fields larger than but close to the first critical magnetic field $H_{c1}$ treated in the literature is not addressed here) and, as common, for superconductors filling the entire $\R^2$.

We consider the simplest 
perturbations having the same (gauge-) periodicity as the underlying (stationary) Abrikosov vortex lattice solutions (we call such perturbations \emph{gauge-periodic}) and prove 
for a lattice of arbitrary shape that, under gauge-periodic perturbations,
\begin{itemize}
\item[(i)] \emph{ Abrikosov vortex lattice solutions are asymptotically stable for}  $\kappa^2 > \frac{1}{2}(1 - \frac{1}{\beta(\tau)})$;
 \item[(ii)]   \emph{Abrikosov vortex lattice solutions are unstable for} $\kappa^2 < \frac{1}{2}(1 - \frac{1}{\beta(\tau)})$.
\end{itemize}
\textbf{Here $\beta(\tau)$ is the Abrikosov constant depending on the lattice shape $\tau$ (see Subsection \ref{abr-lat} for the definition).} This result belies the common belief among physicists and mathematicians that  Abrikosov - type vortex lattice solutions are stable only for triangular lattices and $\kappa^2 > \frac{1}{2}$. Here $\tau$ is complex number parametrizing the lattice shapes. (For the definitions of various stability notions see Subsection \ref{subsec:stab}.)
Gauge - periodic perturbations are not common type of perturbations occurring in superconductivity, but the methods we develop are fairly robust  and can be extended - at the expense of significantly more technicalities - to substantially wider class of perturbation, which will be done elsewhere. Moreover, the same techniques could be used in other problems of pattern formation, which are ubiquitous in applications.

 To our knowledge,  previously, the only known result on stability
of the Abrikosov vortex lattice solutions concerned  orbital stability (under the same type of perturbations) which, though not stated explicitly,  can be deduced from the variational proof of \cite{Odeh} that the single vortex  Abrikosov lattices  for $\kappa^2>\frac{1}{2}$ are global minimizers of the Ginzburg-Landau energy functional on the fundamental cell (see also \cite{Dut}, both authors considered only triangular and rectangular lattices; the argument proving the orbital stability for global minimizers was proposed in  \cite{BL}).  However variational techniques do not give the asymptotic stability. Also variational techniques do not give even the orbital stability for $\kappa^2<\frac{1}{2}$ as in this case  Abrikosov lattices are not global minimizers of the Ginzburg-Landau energy functional on the fundamental cell.

In the rest of this section we present the basic equations involved, discuss their properties and related definitions and present our result.

\subsection{Gorkov-Eliashberg-Schmidt equations}

Macroscopically, states of the superconductors are described by the triples, $(\Psi, A, \Phi) : \R^+ \times \R^2 \to \C \times \R^2 \times \R$, where  $\Psi$ is the (complex-valued) order parameter, $A$ is the vector potential, and $\Phi$ is the scalar potential. Physically, $|\Psi|^2$ gives the (local) density of electrons having formed Cooper pairs. $\CURL A$ is the magnetic field, and $-\partial_t A - \nabla\Phi$ is the electric field.
 The dynamics is given by the Gorkov-Eliashberg-Schmidt equations (or time-dependent Ginzburg-Landau equations) in $\R^2$ which can be written as
\begin{equation}\label{GES}
\begin{cases}
    \gamma \partial_{t\Phi} \Psi = \COVLAP{A}\Psi + \kappa^2(1 - |\Psi|^2)\Psi, \\
    \CURL^*\CURL A = -\sigma \partial_{t\Phi} A + \Im(\bar{\Psi}\COVGRAD{A}\Psi).
\end{cases}
\end{equation}
Here $\kappa$ is a positive constant, $\gamma$ a complex number with $\Re \g>0$, $\sigma$ a positive $2\times 2$ matrix and $\COVGRAD{A} = \nabla - iA$ and $\COVLAP{A} = \COVGRAD{A}\cdot \COVGRAD{A}$ are the covariant gradient and Laplacian, and $\partial_{t\Phi}$ is the covariant time derivative $\partial_{t\Phi}\Psi = (\partial_t + i\Phi)\Psi$ or $\partial_{t\Phi}A = \partial_tA + i\nabla\Phi$, $\CURL A := \partial_{x_1}A_2 - \partial_{x_2}A_1$ and $\CURL^* f = (\partial_{x_2}f, -\partial_{x_1}f)$.
The second equation is Amp\`ere's law with $J_N = -\sigma (\partial_t A + \nabla\Phi)$ (using Ohm's law with $\sigma$ as the conductivity tensor) being the normal current associated to the electrons not having formed Cooper pairs, and $J_S = \Im(\bar{\Psi}\COVGRAD{A}\Psi)$ being the supercurrent associated to the electrons having formed such pairs. Note that for a solution $(\Psi, A, \Phi)$ of \eqref{GES} the pair  $(\Psi, A)$ determines $\Phi$ through the equation
\begin{equation}\label{phieq}
\Delta \Phi=-\partial_t \divv A+\divv\sigma^{-1}
[\Im(\overline{\Psi}\nabla_A \Psi)- \CURL^* \CURL A].
\end{equation}

The equations \eqref{GES} have the structure of a gradient-flow equation for
the Ginzburg-Landau energy functional given by
\begin{equation}\label{energy}
    \E_\Omega(\Psi, A) = \frac{1}{2} \int_\Omega \left\{ |\COVGRAD{A}\Psi|^2 + |\CURL A|^2 + \frac{\kappa^2}{2}(1 - |\Psi|^2)^2 \right\},
\end{equation}
where $\Omega$ is any domain in $\R^2$.
Indeed, (ignoring for the moment the boundary terms) they can be put in the form
\begin{equation}\label{grad-flow}
    \partial_{t\Phi}( \Psi,  A) =  - \lam\E'_\Omega(\Psi, A),
\end{equation}
where $\lam$ is the block-diagonal matrix given by $\lam:=\diag (\g^{-1}, \sigma^{-1})$,
and $\E'_\Omega$ is given by
\begin{equation}\label{grad}
    \E'_\Omega(\Psi, A) = (-\COVLAP{A}\Psi - \kappa^2(1 - |\Psi|^2)\Psi, \CURL^*\CURL A - \Im(\bar{\Psi}\COVGRAD{A}\Psi)),
\end{equation}
and is formally the $L^2$-gradient of $\E_\Omega(\Psi, A)$. Though \eqref{grad-flow} is not a standard form of the gradient flow, we show in Lemma \ref{lem:decreasing-energy} below that the energy \eqref{energy} decreases under the flow.



\bigskip


\subsection{Ginzburg-Landau equations}

The static solutions of the Gorkov-Eliashberg-Schmidt equations \eqref{GES} are triples $(\Psi, A, 0)$ independent of time. In this case $(\Psi, A)$ satisfies the well-known Ginzburg-Landau equations, which describe superconductors in thermodynamic equilibrium and which are given by
\begin{equation}\label{GL}
\begin{cases}
    -\COVLAP{A}\Psi = \kappa^2(1 - |\Psi|^2)\Psi, \\
    \CURL^*\CURL A = \Im(\bar{\Psi}\COVGRAD{A}\Psi).
\end{cases}
\end{equation}
Not surprisingly, they are the Euler-Lagrange equations for the energy functional \eqref{energy}: $\E'(\Psi, A) = 0$.

\subsection{Symmetries}

The Gorkov-Eliashberg-Schmidt equations~\eqref{GES} admit several continuous symmetries, that is, transformations which map solutions to solutions:

{\it Gauge symmetry}:  for any sufficiently regular function $\eta : \R^2 \to \R$,
\begin{equation*}
    (\Psi(x, t),\  A(x, t),\ \Phi(x, t))  \mapsto (e^{i\eta(x, t)}\Psi(x, t),\  A(x, t) + \nabla\eta(x, t),\   \Phi(x, t) - \p_t\eta(x, t));
\end{equation*}

{\it Translation symmetry}: for any $h \in \R^2$,
\begin{equation*}
    (\Psi(x, t),\  A(x, t),\ \Phi(x, t))  \mapsto (\Psi(x + h, t),\  A(x + h, t),\ \Phi(x + h, t));
\end{equation*}

{\it Rotation and reflection symmetry}: for any $R \in O(2)$ (including  the reflections $f(x) \mapsto f(-x)$)
\begin{equation*}
    (\Psi(x, t),\  A(x, t),\ \Phi(x, t)) \mapsto (\Psi(Rx, t),\   R^{-1}A(Rx, t),\ \Phi(Rx, t)).
\end{equation*}
These symmetries restrict to symmetries of the Ginzburg-Landau equations by considering time-independent transformations.


\subsection{Abrikosov lattices}\label{abr-lat}

The Abrikosov (vortex) lattice solutions are  solutions of the Ginzburg-Landau equations \eqref{GL} which exhibit double-periodicity in their physical properties. 
To give a rigorous definition, let $\LAT$ be a lattice. We say that $(\Psi, A)$ is gauge-periodic (with respect to the lattice $\LAT$), if there exist functions $g_\nu : \R^2 \to \R$, $\nu$ belonging to a basis of $\LAT$, such that for all basis vectors $\nu$,
\begin{equation}\label{ALper-cond}
\begin{cases}
    \Psi(x + \nu) = e^{ig_\nu(x)}\Psi(x), \\
    A(x + \nu) = A(x) + \nabla g_\nu(x).
\end{cases}
\end{equation}
(In terminology of \cite{ST} the pair $(\Psi, A)$ is \emph{equivariant} under the group of lattice translations.)
An important property of gauge-periodic pairs is the quantization of magnetic flux. Let $\Omega$ be any fundamental cell of $\LAT$. Then  the magnetic flux quantization property states that $ \int_\Omega \CURL A=2\pi n$ for some integer $n$. This can be written in terms of the average magnetic flux, $b = \lan \CURL A \ran_{\LAT}$ as
$    b = \frac{2\pi n}{|\Om|},$
where $|\Om|$ denotes the area of $\Omega$,  and, for $f$ any $\LAT$-periodic function, $\lan f \ran_\LAT = \frac{1}{|\Omega|} \int_\Omega f$, denotes the average per lattice cell. Using the reflection symmetry of the problem, one can easily check that we can always assume $n \geq 0$.

Now any lattice $\LAT$ can be given a basis $\{ \nu_1, \nu_2 \}$ such that the complex number $\tau = \frac{|\nu_2|}{|\nu_1|} e^{i\theta}$, where $\theta$ is the angle between $\nu_1$ and $\nu_2$, satisfies the conditions that $|\tau| \geq 1$, $\Im\tau > 0$, $-\frac{1}{2} < \Re\tau \leq \frac{1}{2}$, and $\Re\tau \geq 0$, if $|\tau| = 1$. Although the basis is not unique, the value of $\tau$ is, and we will call it the shape parameter of the lattice.

Note that $\tau$, $b$, and $n$ determine the lattice $\LAT$ up to rotation (which is a symmetry of the Ginzburg-Landau equations). We will say that a pair $(\Psi, A)$ is of type $(\tau, b, n)$, if the underlying lattice has shape parameter $\tau$, the average magnetic flux per lattice cell is equal to $b$, and there are $n$ quanta of magnetic flux per lattice cell. We also restrict ourselves to  $C^\infty$ pairs $(\Psi, A)$ which suffices for us due to elliptic and parabolic regularity. We have the following existence theorem (see \cite{Odeh, BGT, Dut, TS}).
\begin{theorem}
\label{thm:existence}
    Let $\tau$ be any lattice shape parameter and let $b$ be such that $\kappa^2 - b$ is sufficiently small. Then there exists an Abrikosov lattice solution $u^\tau_b = (\Psi^\tau_b, A^\tau_b)$ of type $(\tau, b, 1)$.
\end{theorem}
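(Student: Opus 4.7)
The plan is to prove existence by bifurcation from the normal branch. Fix a vector potential $A_b$ on $\R^2$ with constant magnetic field $\CURL A_b = b$ (e.g.\ the symmetric gauge $A_b(x) = \frac{b}{2}(-x_2, x_1)$). Then $u_{\rm n} := (0, A_b)$ is a trivial solution of \eqref{GL} and, up to a rotation, is gauge-periodic with respect to any lattice of type $(\tau, b, 1)$. I seek a branch of nontrivial solutions $(\Psi, A_b + \alpha)$ of this same type emanating from $u_{\rm n}$ as $b$ decreases through the critical value $\kappa^2$.

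Next I set up the functional-analytic framework on a Hilbert space $\cX$ of gauge-periodic pairs $(\phi, \alpha)$ satisfying \eqref{ALper-cond} for the multipliers determined by $A_b$, of some Sobolev regularity (say $H^2$) on a fundamental cell $\Om$, and I impose a Coulomb-type gauge-fixing $\divv \alpha = 0$, $\lan \alpha \ran_\LAT = 0$ to kill the residual gauge freedom. The Ginzburg-Landau map $F : (\phi, \alpha; b) \mapsto \E'_\Om(\phi, A_b + \alpha)$ is a smooth map of Banach spaces vanishing at $u_{\rm n}$, and its linearization there decouples as $L_b = M_b \oplus N$, where $N = \CURL^* \CURL$ on the $\alpha$-component (invertible on divergence-free, mean-zero periodic vector fields) and $M_b = -\COVLAP{A_b} - \kappa^2$ on the $\phi$-component. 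Using the Landau-level raising and lowering operators associated with $A_b$, one shows that $-\COVLAP{A_b}$ on the gauge-periodic section space with one quantum of flux per cell has pure point spectrum $\{(2k+1)b : k \geq 0\}$, with each eigenvalue simple; hence $\Null M_b$ is one-dimensional precisely at $b = \kappa^2$, spanned by a holomorphic theta-function type section $\psi^\tau_b$, and by self-adjointness the same element spans the cokernel.

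I then perform a Lyapunov--Schmidt reduction. Write $\phi = s \psi^\tau_b + \phi_\perp$ with $\phi_\perp \perp \psi^\tau_b$, and solve the projection of \eqref{GL} onto the $L^2$-complement of $\C \psi^\tau_b$ for $(\phi_\perp, \alpha)$ as a smooth function of the parameters $(s, b)$ near $(0, \kappa^2)$ via the implicit function theorem; this is possible because the restriction of $L_b$ to the complement is uniformly invertible in a neighborhood of $b = \kappa^2$. Substituting $(\phi_\perp(s, b), \alpha(s, b))$ back into the kernel component of \eqref{GL} yields a scalar bifurcation equation $\Phi(s, b) = 0$ in $\C$. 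The residual $U(1)$ gauge symmetry $s \mapsto e^{i\h} s$ allows one to take $s \geq 0$, and Taylor expansion gives
\[ \Phi(s, b) = (b - \kappa^2)\, s + c(\tau, \kappa)\, s^3 + O\bigl((s^2 + |\kappa^2 - b|)^2 s\bigr), \]
where $c(\tau, \kappa)$ is computed by second-order perturbation theory in terms of averages of $|\psi^\tau_b|^4$, hence ultimately in terms of the Abrikosov constant $\beta(\tau)$.

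The main obstacle is verifying that $c(\tau, \kappa) > 0$ uniformly in $\tau$, which requires a careful evaluation of the $O(s^2)$ backreaction of $\phi$ on $\alpha$ (via the Amp\`ere equation) combined with the universal lower bound $\beta(\tau) \geq 1$. Once the non-degeneracy $c \neq 0$ with the correct sign is established, the bifurcation equation is solved by $s^2 = (\kappa^2 - b)/c(\tau, \kappa) + O((\kappa^2 - b)^2)$ for $\kappa^2 - b > 0$ sufficiently small, producing the desired Abrikosov lattice solution $u^\tau_b = (\Psi^\tau_b, A^\tau_b)$ of type $(\tau, b, 1)$.
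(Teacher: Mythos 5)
Your bifurcation-from-the-normal-state strategy with Lyapunov--Schmidt reduction is exactly the route taken in the works the paper cites for this theorem (\cite{Odeh, BGT, Dut, TS}); the paper itself states Theorem \ref{thm:existence} without proof and only later imports the resulting expansions $\psi_b = \e\psi^0 + O(\e^3)$, $a_b = a^0 + \e^2 a^1 + O(\e^4)$ from \cite{TS}. Your functional setup (Coulomb gauge fixing, the decoupled linearization, simplicity of the lowest Landau level for one flux quantum, reduction to a $U(1)$-equivariant scalar bifurcation equation) is correct and consistent with that literature.

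The one genuine error is the claim that the cubic coefficient satisfies $c(\tau,\kappa)>0$ uniformly in $\tau$, and that this follows from $\beta(\tau)\ge 1$. Carrying out the second-order computation (it is essentially the computation in the paper's Lemma \ref{lem:F2} and equation \eqref{relationaver}) shows that $c(\tau,\kappa)$ is proportional to $(\kappa^2-\tfrac12)\beta(\tau)+\tfrac12$, i.e.\ to $(2\kappa^2-1)\beta(\tau)+1$. The bound $\beta(\tau)\ge 1$ only guarantees positivity when $\kappa^2\ge\tfrac12$; since $\beta(\tau)\to\infty$ as $\Im\tau\to\infty$, for any $\kappa^2<\tfrac12$ there are lattice shapes with $\kappa^2<\tfrac12(1-\tfrac{1}{\beta(\tau)})$ and hence $c<0$ --- precisely the regime in which the paper proves instability, so those solutions certainly exist and your argument must produce them. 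With $c<0$ your final relation $s^2=(\kappa^2-b)/c+O((\kappa^2-b)^2)$ has no solution for $b<\kappa^2$: the bifurcation is subcritical and the nontrivial branch lives on the side $b>\kappa^2$. The fix is to require only $c\ne 0$ and let the sign of $c$ dictate the sign of $\kappa^2-b$; this is exactly what the paper's parameter $\e$ in \eqref{parameter} encodes, since the expression under that square root must be positive. You should also flag that the borderline case $\kappa^2=\tfrac12(1-\tfrac{1}{\beta(\tau)})$, where $c=0$, is not covered by this argument and would require expanding to higher order (the paper's stability theorem excludes it, but the existence statement as written does not).
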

More detailed properties of these solutions are given in Section \ref{sec:rescaling} below. As we deal only with the case $n = 1$, we now assume that this is so and drop $n$ from the notation. Note that in this case the average magnetic field, $b$, and the fundamental cell area, $|\Om|$, are related as
\begin{equation}
    b = \frac{2\pi }{|\Om|}.
\end{equation}

Using the symmetries of the Ginzburg-Landau equations, one can show  (see \cite{TS} for example) that any Abrikosov lattice $(\Psi, A)$ of type $(\tau, b)$ is gauge equivalent to one satisfying the following conditions:
\begin{enumerate}[(I)]
\item $(\Psi, A)$ is gauge-periodic with respect to the lattice $\LAT^\tau_b$ spanned by $r^\tau_b(1,0)$ and $r^\tau_b(\Re\tau,\Im\tau)$, where $r^\tau_b = \sqrt{\frac{2\pi}{b\Im\tau}}$.
\item The function $g_\nu(x)$ in \eqref{ALper-cond} can be chosen as $g_\nu(x) = \frac{b}{2}\nu \cdot J x$, 
where  $J$ is the symplectic matrix
    \begin{equation*}
        J = \left( \begin{array}{cc} 0 & -1 \\ 1 & 0 \end{array} \right),
    \end{equation*}
     and therefore $\Psi$ and $A$ satisfy the quasiperiodic boundary conditions
    \begin{equation}
        \Psi(x+\nu) = e^{\frac{ib}{2} \nu\cdot J x}\Psi(x),\ \quad \textrm{and}\ \quad  A(x+\nu) = A(x)+\frac{b}{2}J\nu,
    \end{equation}
for any element  $\nu$ of a basis of $\LAT$.
   \item  $\divv A(x)=0$ and $\lan A(x) - A_b^0(x) \ran_{\LAT^\tau_b} = 0$, where $A_b^0(x):= \frac{b}{2}Jx$. 
\end{enumerate}
We note that the only continuous symmetry that preserves these properties is the $U(1)$ symmetry
\begin{equation*}
    T_\gamma : u= (\Psi, A) \ra T_\gamma u= (e^{i\gamma}\Psi, A),\ \forall \gamma \in \R.
\end{equation*}

\subsection{Gauge periodic perturbations}\label{subsec:pert}

We consider perturbations $v = (\xi, \alpha)$ of the $(\tau, b)$-Abrikosov lattice $u_b^\tau$ s.t.  $u_b^\tau+v$ is again of the type $(\tau, b)$. Hence it suffices to restrict the problem to any fundamental cell $\Omega^\tau_b$ of the lattice $\LAT^\tau_b$. From the definition of the type $(\tau, b)$ pairs, it follows that $\xi$ and $\al$ satisfy the quasiperiodic boundary conditions
\begin{equation}\label{xial-per-cond}
    \xi(x+\nu) = e^{\frac{ib}{2} \nu \cdot Jx}\xi(x),\ \quad  \textrm{and}\ \quad  \alpha(x+\nu) = \alpha(x),
\end{equation}
for any element  $\nu$ of a basis of $\LAT$, and with $\alpha$ 
being divergence-free, and having mean zero, i.e.,
\begin{equation}\label{dival-aval}
    \divv\alpha(x) = 0,\  \quad   \textrm{and}\  \quad \lan \alpha(x) \ran_{\LAT^\tau_b} = 0.
\end{equation}
 We introduce our space of perturbations $\Per^\tau_b$ to consist of all pairs $v = (\xi, \alpha) \in H^1(\Omega^\tau_b;\C\times\R^2)$ satisfying \eqref{xial-per-cond} and \eqref{dival-aval}. This space is naturally a (real) Hilbert space with the $H^1$ inner product of $v = (\xi, \alpha)$ and $v' = (\xi', \alpha')$ given by
\begin{align*}
    \langle v, v' \rangle_{H^1} = \frac{1}{|\Omega^\tau_b|} \Re \int_{\Omega^\tau_b} \big(\bar{\xi}\xi' + \overline{\COVGRAD{A_b^0}\xi} \cdot \COVGRAD{A_b^0}\xi' + \alpha \cdot \alpha' + \sum_{k=1}^{2} \nabla \alpha_k \cdot \nabla \alpha_k'\big).
\end{align*}
Here we use that the covariant gradient preserves the quasiperiodic boundary conditions and the dot product is in $\R^2$.  We also introduce the $L^2$ inner product
\begin{equation}
\label{eq:L2-inner-product}
    \langle v, v' \rangle_{L^2} = \frac{1}{|\Omega^\tau_b|} \Re \int_{\Omega^\tau_b} \big(\bar{\xi}\xi' + \alpha \cdot \alpha\big).
\end{equation}
Note that (i) if $u=(\Psi, A)$ satisfies (I) - (III), then $u-u_b^\tau$ satisfies \eqref{xial-per-cond} - \eqref{dival-aval};
and (ii) the conditions \eqref{xial-per-cond} and \eqref{dival-aval} break the translational and most of the gauge symmetry, leaving only the global gauge symmetry given by 
the gauge transformation 
\begin{equation*}
    T_\gamma : v= (\xi, \al) \mapsto T_\gamma v= (e^{i\gamma}\xi, \al).
\end{equation*}


\subsection{Stability under gauge periodic perturbations}\label{subsec:stab}

We now wish to study the stability of these Abrikosov lattice solutions under a class of perturbations that preserve the double-periodicity of the solution. More precisely, we focus on solutions $(\Psi, A, \Phi)$ of \eqref{GES} with the initial conditions of the form 
 $u_0\equiv(\Psi_0, A_0)=u^\tau_b + v_0,\ \ v_0 \in \Per^\tau_b$.

We now consider the tubular neighbourhood $U_\delta$ of the manifold $\{ T_\gamma u^\tau_b : \gamma \in \R \}$ of Abrikosov lattices, given by
\begin{equation}
\label{eq:tubular}
    U_\delta = \{ T_\gamma(u^\tau_b + v) : \gamma \in \R, \ v \in \Per^\tau_b, \ \|v\|_{H^1} < \delta \}
\end{equation}

We will say that $u^\tau_b$ is \textit{orbitally stable} under gauge-periodic perturbations if for all $\e > 0$, there exists $\delta > 0$ such that for any $C^1$ solution $(\Psi, A, \Phi)$, if $u = (\Psi, A) \in U_\delta$ when $t = 0$, then $u \in U_\e$ for all $t \geq 0$, and \textit{asymptotically stable}, if there exist $\gamma(t)\in \R$, such that any $C^1$ solution $(\Psi, A, \Phi)$, with $u = (\Psi, A) \in U_\delta$ when $t = 0$, satisfies $u - T_{\gamma(t)}u^\tau_b\ra 0,$ as $t \ra \infty$. We will say that $u^\tau_b$ is \textit{unstable} under gauge-periodic perturbations if it is not orbitally stable.

In order to state are main result, we first need to introduce the Abrikosov function $\beta(\tau)$. We fix $b = \kappa^2$ and define $\beta(\tau)$ to be
\begin{equation}
\label{eq:abrikosov-constant}
    \beta(\tau) =  \frac{\lan |\xi|^4 \ran_{\LAT^\tau_b}}{\lan |\xi|^2 \ran_{\LAT^\tau_b}^2},
\end{equation}
where 
$\xi \neq 0$ is the unique solution of the equation $-\COVLAP{A^0_b}\xi = \kappa^2 \xi$ satisfying the quasiperiodic boundary conditions \eqref{xial-per-cond}.
The main result in this paper is the following.
\begin{theorem}\label{thm:stability}
    For all $b$ sufficiently close to $\kappa^2$, the Abrikosov lattice $u^\tau_b$ is asymptotically stable under gauge-periodic perturbations if $\kappa^2 > \frac{1}{2}(1 - \frac{1}{\beta(\tau)})$ and is unstable if $\kappa^2 < \frac{1}{2}(1 - \frac{1}{\beta(\tau)})$.
\end{theorem}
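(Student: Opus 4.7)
The plan is to exploit the gradient-flow structure of the Gorkov-Eliashberg-Schmidt equations and analyze the linearization of the energy functional around $u^\tau_b$. Writing $u = u^\tau_b + v$ with $v \in \Per^\tau_b$ and fixing the Coulomb gauge $\divv A = 0$ (which determines $\Phi$ via \eqref{phieq}), the equation for $v$ becomes $\pt v = -\Lam L^\tau_b v + N(v)$, where $\Lam := \diag(\g^{-1}, \s^{-1})$, $L^\tau_b := \E''(u^\tau_b)$ is the Hessian of $\E$ restricted to $\Per^\tau_b$, and $N(v)$ is at least quadratic in $v$. Since $\Per^\tau_b$ consists of fields on the compact cell $\Om^\tau_b$ with quasiperiodic boundary conditions, $L^\tau_b$ is self-adjoint on $L^2(\Om^\tau_b)$ with compact resolvent; the residual $U(1)$ symmetry forces $(i\Psi^\tau_b, 0)$ to be a zero mode, and I denote by $\Per_0 \subset \Per^\tau_b$ its $L^2$-orthogonal complement.

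The core of the proof is to show that $L^\tau_b|_{\Per_0} \geq c > 0$ when $\kappa^2 > \frac{1}{2}(1 - 1/\beta(\tau))$, and that $L^\tau_b|_{\Per_0}$ has a simple negative eigenvalue in the reverse case, for all $b$ with $\kappa^2 - b$ small enough. To do this I set up a Lyapunov-Schmidt reduction anchored at the bifurcation point $b = \kappa^2$, $u = (0, A^0_b)$. There the kernel of $\E''$ is real two-dimensional, spanned by $\Psi_0$ and $i\Psi_0$ (the lowest-Landau-level mode from \eqref{eq:abrikosov-constant}), while the $A$-block is uniformly positive on divergence-free mean-zero fields and the nonzero Landau levels contribute a uniform gap to the $\Psi$-block. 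Along the bifurcation branch, the direction $i\Psi^\tau_b$ tracks the gauge zero mode exactly (by the $T_\g$ symmetry), and the other direction is controlled via the Abrikosov expansion $\Psi^\tau_b = s\Psi_0 + s^3 w + O(s^5)$ with $s^2$ proportional to $\kappa^2 - b$. Substituting this expansion into $\lan \Psi_0, L^\tau_b \Psi_0\ran$ and invoking the amplitude relation obtained by pairing the first equation of \eqref{GL} with $\overline{\Psi^\tau_b}$, schematically $\lan|\Psi^\tau_b|^2\ran_{\LAT^\tau_b}[(2\kappa^2-1)\beta(\tau) + 1] = 2(\kappa^2 - b) + O(s^4)$, the leading eigenvalue on this direction comes out proportional to $2\kappa^2 - (1 - 1/\beta(\tau))$, which pins down the threshold.

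Given the spectral gap, asymptotic stability follows a standard dissipative-gradient scheme. Lemma \ref{lem:decreasing-energy} gives $\frac{d}{dt}\E \leq 0$, so $v$ remains small in $H^1$ provided the initial data are; modulating $\g(t)$ to enforce $v(t) \perp_{L^2} (i\Psi^\tau_b, 0)$ removes the marginal direction, and the resulting inequality $\frac{d}{dt}\|v\|_{L^2}^2 \leq -2c\|v\|_{L^2}^2 + O(\|v\|_{H^1}^3)$, combined with parabolic smoothing to propagate the decay to $H^1$, yields $\|u(t) - T_{\g(t)}u^\tau_b\|_{H^1} \to 0$ exponentially. In the unstable regime, the simple negative eigenvalue of $L^\tau_b|_{\Per_0}$ supplies an exponentially expanding linear direction; a standard Lyapunov-type argument (starting on the unstable eigenvector with amplitude $\delta$, the linear growth reaches $O(1)$ at time $T \sim \log(1/\delta)$, before the quadratic nonlinearity $N(v)$ can catch up) then exhibits, for arbitrarily small $\delta$, trajectories that exit $U_\e$ for some fixed $\e > 0$.

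The main obstacle is the spectral calculation in the second paragraph: while the bifurcation expansion itself is classical, one must verify that the cubic correction $w$ and the induced $A$-perturbation contribute only subleading terms to $\lan \Psi_0, L^\tau_b \Psi_0\ran$, and that the leading coefficient genuinely collapses to $2\kappa^2 - (1 - 1/\beta(\tau))$ without spurious rational factors in $\beta(\tau)$. Carrying out this matching cleanly, and upgrading it to uniform-in-$(\kappa^2 - b)$ positivity on the complementary part of $\Per_0$, is the technical heart of the argument.
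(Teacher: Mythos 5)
Your proposal follows essentially the same route as the paper: modulation along the residual $U(1)$ orbit to project out the gauge zero mode $(i\Psi^\tau_b,0)$; a perturbative computation of the Hessian's lowest eigenvalue on the orthogonal complement anchored at the bifurcation point $b=\kappa^2$ (the paper uses a Feshbach--Schur map where you propose Lyapunov--Schmidt, and its leading coefficient $(2\kappa^2-1)\beta(\tau)+1$ is exactly your $\beta(\tau)\bigl[2\kappa^2-(1-\frac{1}{\beta(\tau)})\bigr]$); a coercivity-plus-differential-inequality scheme for asymptotic stability (the paper runs the inequality for $\Lambda=\frac{1}{2}\langle v,L_b v\rangle_{L^2}$, which is equivalent to $\|v\|_{H^1}^2$ by the two-sided Hessian bounds, rather than for $\|v\|_{L^2}^2$ plus parabolic smoothing); and the standard Duhamel/unstable-eigenvector growth argument for instability. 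The approach and the key quantitative input coincide with the paper's.
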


To our knowledge there have been no prior results on the asymptotic stability of  Abrikosov lattices.
Orbital stability can often be deduced if the solution in question is a minimizer of an appropriate energy functional (see \cite{BL}). Hence the orbital stability of Abrikosov lattices for $\kappa^2>\frac{1}{2}$ follows from the variational proof of \cite{Odeh} that the single vortex  Abrikosov lattices  for $\kappa^2>\frac{1}{2}$ are global minimizers of the Ginzburg-Landau energy functional on the fundamental cell (see also \cite{Dut}, both authors considered only triangular and rectangular lattices).  However variational techniques do not give asymptotic stability. Also variational techniques do not give even orbital stability for $\kappa^2<\frac{1}{2}$ as in this case  Abrikosov lattices are not global minimizers of the Ginzburg-Landau energy functional on the fundamental cell.

This paper is organized as follows. In the next section we prove the theorem above, modulo a statement about properties of the Hessian of the Ginzburg-Landau energy functional. The latter statement is proven in Section \ref{sec:positivity}. In the Appendix we use energy methods to prove the weaker statement of orbital stability.


\section{Proof of Theorem \ref{thm:stability}}

As we consider $\tau$ fixed, from now on we do not display it in the notation. To simplify the notation we set $\g=1$ and $\s=\one$. To extend the proof of Theorem \ref{thm:stability} to   $\g$ and $\s$ satisfying $\Re\g>0$ and $\s>0$ is straightforward.

\subsection{Decomposition}

The one-dimensional manifold of solutions $\mathcal{M} = \{ T_\gamma u_b : \gamma \in \R \}$ lies in $U_\delta$ for all $\delta$. The tangent space at $u_b \in \mathcal{M}$ is spanned by the infinitesimal global gauge transformation, $\Gamma_b=\p_\g T_\g u_b|_{\g=0}$, given by
\begin{equation}\label{glob-gauge-mod}
    \Gamma_b = (i\Psi_b, 0).
\end{equation}
We now prove the following decomposition for $u$ close to the manifold $\mathcal{M}$.

\begin{proposition}\label{prop:decomposition}
    There exist $\delta_0 > 0$ and 
    a map $\eta : U_{\delta_0} \to \R$ such that $T_{\eta(u)}u - u_b \perp \Gamma_b$ (with respect to the $L^2$ inner product).
\end{proposition}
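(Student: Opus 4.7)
The plan is to apply the implicit function theorem to the scalar map
\[
F(u,\eta) := \langle T_\eta u - u_b,\, \Gamma_b \rangle_{L^2},
\]
regarded as a function of $(u,\eta)\in H^1\times\R$ near $(u_b, 0)$. This is the standard modulation-theoretic trick for extracting a normal coordinate near a one-parameter group orbit.

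The hypotheses of the IFT are easy to verify. Smoothness of $F$ in both arguments is immediate, since $T_\eta$ acts on the $\Psi$-component by multiplication by $e^{i\eta}$ and the $L^2$ pairing \eqref{eq:L2-inner-product} is bounded on $H^1\times H^1$. The condition $F(u_b,0) = 0$ is trivial, and the crucial nondegeneracy
\[
\partial_\eta F(u_b, 0) = \langle \partial_\eta T_\eta u_b|_{\eta=0},\, \Gamma_b \rangle_{L^2} = \|\Gamma_b\|_{L^2}^2 = \frac{1}{|\Omega^\tau_b|}\int_{\Omega^\tau_b} |\Psi_b|^2 > 0
\]
follows from \eqref{glob-gauge-mod} together with $\Psi_b\not\equiv 0$. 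I would also observe that the integrand defining $F$ is genuinely $\LAT^\tau_b$-periodic: by \eqref{xial-per-cond} the scalar components of $T_\eta u - u_b$ and $\Gamma_b$ transform under $x\mapsto x+\nu$ with the same gauge factor $e^{ig_\nu(x)}$, so their pointwise product is invariant. The IFT then produces $\rho,\sigma>0$ and a smooth map $\eta_0:B_{H^1}(u_b,\rho)\to(-\sigma,\sigma)$ solving $F(u,\eta_0(u))=0$, which is exactly the orthogonality $T_{\eta_0(u)}u - u_b \perp \Gamma_b$.

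To extend $\eta_0$ to the full tubular neighbourhood $U_{\delta_0}$, for $\delta_0\le\rho$, I would exploit the exact covariance $F(T_\gamma u,\eta) = F(u,\eta+\gamma)$, which is immediate from $T_\eta T_\gamma = T_{\eta+\gamma}$ and the $T_\gamma$-invariance of $u_b$ and $\Gamma_b$. Given $u = T_\gamma(u_b+v)\in U_{\delta_0}$, define $\eta(u) := \eta_0(u_b+v) - \gamma$; this satisfies $F(u,\eta(u))=0$ by covariance, and combining the local uniqueness in the IFT with the same covariance shows $\eta(u)$ is independent of the chosen representative $(\gamma,v)$ modulo $2\pi$ — which is enough since $T_\eta$ is $2\pi$-periodic in $\eta$. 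I do not expect any real obstacle here; the argument is a routine adaptation of the IFT to a gauge-invariant setting, the only minor technical nuisance being the $2\pi$-ambiguity when patching the local construction into a globally defined map on $U_{\delta_0}$.
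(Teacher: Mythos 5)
Your proposal is correct and follows essentially the same route as the paper: an implicit function theorem argument applied to the pairing $\langle T_\eta u - u_b, \Gamma_b\rangle_{L^2}$, with nondegeneracy coming from $\partial_\eta F(u_b,0)=\frac{1}{|\Omega^\tau_b|}\int_{\Omega^\tau_b}|\Psi_b|^2>0$, followed by extension to the full tubular neighbourhood via the covariance $F(T_\gamma u,\eta)=F(u,\eta+\gamma)$ and a well-definedness check. Your explicit handling of the $2\pi$-ambiguity is a small but worthwhile addition that the paper's own well-definedness argument glosses over.
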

\begin{proof}
    Let $X$ be the affine space $X = u_b + \Per_b$. The desired $\g=\eta(u)$ solves the equation $f(\gamma, u)=0$, where the map
    $f : \R \times X \to \R$ is defined as
    \begin{equation*}
        f(\gamma, u) = \lan \Gamma_b,  T_{\gamma}u - u_b \ran_{L^2}.
    \end{equation*}
    We wish to apply the Implicit Function Theorem to $f(\gamma, u)=0$. It is clear that $f$ is a $C^1$ map and that $f(0,u_b) = 0$. A calculation shows that $D_\gamma f(0,u_b) = \int_{\Omega_b} |\Psi_b|^2 \neq 0$. The Implicit Function Theorem then gives us a neighbourhood $V$ of $u_b$ in $X$ and a neighbourhood $W$ of $0$ in $\R$ and a map $h : V \to W$ such that $f(\gamma, u) = 0$ for $(\gamma, u) \in W \times V$ if and only if $\gamma = h(u)$. We can always assume that $V$ is a ball of radius $\delta_0$.

    We can now define the map $\eta$ on $U_\delta$ for $\delta < \delta_0$ as follows. Given $u \in U_\delta$, choose $\gamma$ such that $u = T_\gamma u'$ with $ u' \in V$. We define $\eta(u) = h(u') - \gamma$.

    We show that $\eta$ is well defined. We first show that if $|\gamma|$ is sufficiently small, then $h(T_\gamma u) = h(u) - \gamma$. First note for all $\gamma$, $T_\gamma(V) \subset V$. One can easily verify that $f(h(u) - \gamma, T_\gamma u) = f(h(u), u) = 0$, and therefore by uniqueness of $H$, it suffices to show that $h(u) - \gamma \in W$, but this can easily be done by taking $V$ to be smaller if necessary.

    Suppose now that we also have $u = T_{\gamma'}(u_b + v')$. Then $T_{\gamma' - \gamma}(u_b + v') = u_b + v$. We therefore have
    \begin{equation*}
        h(u_b + v) - \gamma
        = h(T_{\gamma' - \gamma}(u_b + v')) - \gamma
        = h(u_b + v') - (\gamma' - \gamma) - \gamma
        = h(u_b + v') - \gamma',
    \end{equation*}
    so $\eta$ is well-defined. Finally, we compute $\lan \Gamma_b,  T_{\eta(u)}u - u_b \ran_{L^2}=\lan \Gamma_b,  T_{h(u')}T_{-\gamma}u - u_b \ran_{L^2}=\lan \Gamma_b,  T_{h(u')}u' - u_b \ran_{L^2}=f(h(u'), u')=0$ and the proof is complete.
\end{proof}


\subsection{Hessian}

The chief tool in the proof of the stability result is the analysis of the associated Hessian, i.e., the second derivative of the energy functional $\E$. We first note that $\E$ is a well-defined functional on $U_\delta$, and that $\E'$ is in fact its $L^2$-gradient in the sense that for all $v \in \Per_b$,
\begin{equation*}
    D\E(u)v = \lan \E'(u), v \ran_{L^2}.
\end{equation*}
where $D$ is the G\^ateaux derivative, $D\E(u)v = \p_s\E(u+sv)|_{s=0}$. We define the Hessian at $u_b$ to be the operator $L_b := D\E'(u_b)$, where $D$ is the G\^ateaux derivative on maps, on the domain $D(L_b): = \Per_b$.  Explicitly, $ \textrm{for}\ v = \left( \begin{array}{c}\xi\\  \alpha \end{array} \right)$, it is given by
\begin{equation} \label{L}
L_b v =
\left( \begin{array}{c} -\COVLAP{A_b}\xi + \kappa^2(2|\Psi_b|^2 - 1)\xi + \kappa^2{\Psi_b}^2\bar{\xi} + 2i(\COVGRAD{A^\tau_b}\Psi_b)\cdot\alpha \\
\CURL^*\CURL \alpha + |\Psi_b|^2\alpha - \Im(\bar{\xi}\COVGRAD{A_b}\Psi_b + \bar{\Psi}_b\COVGRAD{A_b}\xi) \end{array} \right).
\end{equation}
This is a real-linear operator on the space $L^2(\Om, \C)\times L^2(\Om, \R^2)$ with the domain $\cP_b$. We define for it the notion of spectrum and of discrete spectrum in the usual way.
 We introduce the new parameter 
\begin{equation}\label{parameter}
	\e = \sqrt{\frac{\kappa^2 - b}{\kappa^2[(2\kappa^2 - 1)\beta(\tau) + 1]}}.
\end{equation}
The term $(2\kappa^2 - 1)\beta(\tau) + 1$ in denominator is necessary in order to have a positive expression under the square root and to regulate the size of the perturbation domain. 
 \DETAILS{$\e=\sqrt{\frac{\kappa^2 - b}{\kappa^2\lam^1}} $, where
    \begin{align} \label{lam1}
\lam^1 = \left(\kappa^2 - \frac{1}{2} \right) \frac{\lan |\psi^0|^4\ran_\cL}{\lan |\psi^0|^2\ran_\cL } + 
\frac{1}{2}\lan |\psi^0|^2\ran_\cL,
	\end{align}
with  $\psi^0$ satisfying the equation  $-\COVLAP{A^0_b}\psi^0 = \kappa^2 \psi^0$ and normalized as ??.
(The normalization constant $\lam^1$ is introduced for convenience.)}
 The main result concerning this operator which we use in the proof  of Theorem \ref{thm:stability} is the following theorem, whose proof we postpone until Section \ref{sec:positivity}.
\begin{theorem} \label{thm:positivity}
    Suppose that $\kappa^2 \neq \frac{1}{2}(1 - \frac{1}{\beta(\tau)})$ and that $b$ is sufficiently close to $\kappa^2$. Then we have the following statements:
    \begin{enumerate}
    \item The operator $L_b$ has a real, discrete spectrum which includes the eigenvalue $0$ with the eigenfunction $\Gamma_b$, while its lowest eigenvalue, $\theta$, on the subspace  $\{v \in \Per_b\ |\  v \perp \Gamma_b\}$ is of the form
\begin{equation} \label{theta}
\theta:= 2b \left[ \left(2\kappa^2 - 1\right)\beta(\tau) + 1 \right] \e^2 +O(\e^3).
\end{equation}
Consequently,  if $\kappa^2 < \frac{1}{2}(1 - \frac{1}{\beta(\tau)})$, then $L_b$ has a negative eigenvalue.  

\item If $\kappa^2 > \frac{1}{2}(1 - \frac{1}{\beta(\tau)})$, then there is a uniform constant $c > 0$, such that for all $v \in \Per_b$ satisfying $v \perp \Gamma_b$,
        \begin{equation} \label{L-low-bound}
           \lan v, L_b v \ran_{L^2} \geq c\theta \|v\|_{H^1}^2.
        \end{equation}

 \item  There exists a positive constant $c> 0$ such that, for all $v \in \Per_b$,
\begin{equation}  \label{L-upp-bound}
 |\lan v, L_b v \ran_{L^2}| \le c \|v\|_{H^1}^2. 
\end{equation}

    \end{enumerate}
\end{theorem}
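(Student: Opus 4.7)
I would exploit the fact that $u_b$ bifurcates from the trivial solution at the critical field $b = \kappa^2$: the existence theorem (refined in Section~\ref{sec:rescaling}) gives $\Psi_b = \e\psi^0 + O(\e^3)$ and $A_b - A_b^0 = O(\e^2)$, so $L_b$ is an $O(\e^2)$ perturbation of the block-diagonal reference operator $L^{(0)} := \diag(-\COVLAP{A_b^0} - \kappa^2,\ \CURL^*\CURL)$. The plan is to analyze $L^{(0)}$ exactly, extract the small eigenvalues of $L_b$ via a Feshbach--Schur/Lyapunov--Schmidt reduction on its two-real-dimensional near-kernel, and finally upgrade the resulting $L^2$ spectral bound to the $H^1$ coercivity estimate.

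\textbf{Discrete spectrum and the zero eigenvalue.} $L_b$ is second-order elliptic with $C^\infty$ coefficients on the fundamental cell, and is $\R$-linear and symmetric with respect to the real $L^2$ inner product \eqref{eq:L2-inner-product} (the $\Psi_b^2\bar\xi$ term breaks $\C$-linearity but not symmetry). Combined with the compactness of $\Om_b$ and the quasi-periodic boundary conditions, this gives $L_b$ a compact resolvent and hence a purely discrete real spectrum. Gauge invariance of $\E$ yields $\E'(T_\gamma u_b) \equiv 0$; differentiating at $\gamma = 0$ produces $L_b \Gamma_b = 0$.

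\textbf{Near-kernel and the expansion of $\theta$.} On $\Per_b$ the magnetic Laplacian $-\COVLAP{A_b^0}$ has the Landau-level spectrum $\{(2k+1)b : k \ge 0\}$; for one flux quantum per cell each level has complex dimension one, so at $b = \kappa^2$ the operator $-\COVLAP{A_b^0} - \kappa^2$ has a two-real-dimensional kernel $\{(c\psi^0, 0) : c \in \C\}$ isolated from the rest of the spectrum by a gap of size $2b$. On divergence-free mean-zero fields, $\CURL^*\CURL = -\Delta$ is strictly positive, so $\ker L^{(0)}$ coincides with this subspace, and the same subspace remains an approximate near-kernel for $b$ slightly below $\kappa^2$. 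Analytic perturbation theory then produces exactly two eigenvalues of $L_b$ of size $O(\e^2)$ bifurcating from this near-kernel while the rest of the spectrum stays $\ge C > 0$: one is identically zero with eigenfunction $\Gamma_b$, and the other is $\theta$, with eigenfunction close to $v_2 := (\Psi_b, 0)$, the $L^2$-orthogonal complement of $\Gamma_b = (i\Psi_b, 0)$ inside the near-kernel (orthogonality is checked directly: $\lan \Gamma_b, v_2\ran_{L^2} = \Re\lan i\Psi_b, \Psi_b\ran_{L^2} = 0$). To compute $\theta$ at leading order, I use the Ginzburg--Landau identity $-\COVLAP{A_b}\Psi_b = \kappa^2(1 - |\Psi_b|^2)\Psi_b$ to simplify the first component of $L_b v_2$ to $2\kappa^2|\Psi_b|^2\Psi_b$; since the $\alpha$-component of $v_2$ vanishes,
\[
\lan v_2, L_b v_2\ran_{L^2} = 2\kappa^2\lan|\Psi_b|^4\ran.
\]
Dividing by $\|v_2\|_{L^2}^2 = \lan|\Psi_b|^2\ran$ and using $\lan|\Psi_b|^4\ran/\lan|\Psi_b|^2\ran^2 = \beta(\tau) + O(\e^2)$ (from $\Psi_b = \e\psi^0 + O(\e^3)$), then substituting the bifurcation relation $\lan|\Psi_b|^2\ran = (\kappa^2-b)/[\kappa^2\beta(\tau)] + O(\e^4)$ coming from Section~\ref{sec:rescaling}, yields $\theta = 2(\kappa^2 - b) + O(\e^4) = 2b[(2\kappa^2 - 1)\beta(\tau) + 1]\e^2 + O(\e^3)$. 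A matching lower bound follows from the $L^{(0)}$-gap, which forces the true $\theta$-eigenfunction to agree with $v_2$ modulo $O(\e^3)$. Since the sign of the leading coefficient is that of $(2\kappa^2 - 1)\beta(\tau) + 1$, the sign claims in~(1) and~(2) follow.

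\textbf{Upper bound, coercivity, and main obstacle.} Part~(3) is immediate from \eqref{L} by Cauchy--Schwarz, integration by parts, and the uniform $L^\infty$ bounds on $\Psi_b$ and $\COVGRAD{A_b}\Psi_b$ from elliptic regularity for \eqref{GL}. For part~(2), a G\aa rding-type inequality gives
\[
\lan v, L_b v\ran_{L^2} \ge c_1\bigl(\|\COVGRAD{A_b}\xi\|_{L^2}^2 + \|\nabla\alpha\|_{L^2}^2\bigr) - C\|v\|_{L^2}^2,
\]
which, combined on $\Gamma_b^\perp$ with the $L^2$ bound $\lan v, L_b v\ran_{L^2} \ge \theta\|v\|_{L^2}^2$ from part~(1) via a convex combination, yields $\lan v, L_b v\ran_{L^2} \ge c\theta\|v\|_{H^1}^2$. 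The main technical obstacle is the rigorous justification of the leading-order eigenvalue computation: controlling the deviation of the true $\theta$-eigenfunction from the trial direction $v_2$ to the required $O(\e^3)$ accuracy requires a careful Feshbach reduction, and extracting the identity $\lan|\Psi_b|^2\ran = (\kappa^2-b)/[\kappa^2\beta(\tau)] + O(\e^4)$ needs the Ginzburg--Landau orthogonality condition projected onto the lowest Landau level, as worked out in Section~\ref{sec:rescaling}.
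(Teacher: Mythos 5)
Your overall architecture (discrete spectrum and the zero mode from gauge invariance, a Feshbach/Lyapunov--Schmidt reduction onto the near-kernel of the reference operator, a G\aa rding inequality combined with the $L^2$ bound by a convex combination for part (2), and Cauchy--Schwarz for part (3)) matches the paper. But the core of part (1) --- the leading-order computation of $\theta$ --- is wrong. Your trial function $v_2=(\Psi_b,0)$ gives only an upper bound on the Rayleigh quotient, and that upper bound is \emph{not} sharp at order $\e^2$: the off-diagonal coupling $2i(\COVGRAD{A_b}\Psi_b)\cdot\alpha$ in \eqref{L} is of size $O(\e)$ (since $\COVGRAD{A_b}\Psi_b=O(\e)$), so the true eigenfunction acquires an $\alpha$-component of relative order $\e$, and by second-order perturbation theory this shifts the $O(\e^2)$ eigenvalue at leading order. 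Your claim that ``the $L^{(0)}$-gap forces the true $\theta$-eigenfunction to agree with $v_2$ modulo $O(\e^3)$'' is therefore false. In the paper's Feshbach reduction this is exactly the term $-W^1\bar P(\bar P K^0\bar P)^{-1}\bar P W^1$, which via \eqref{a1eq}--\eqref{curla1} contributes $2\lan|\psi^0|^4\ran-2\lan|\psi^0|^2\ran^2$, i.e.\ $-2(\beta(\tau)-1)$ to the coefficient $\cF_2$. Without it your method yields $\theta = 4b\kappa^2\beta(\tau)\e^2+O(\e^3)$ rather than $2b[(2\kappa^2-1)\beta(\tau)+1]\e^2+O(\e^3)$; the discrepancy $2b(\beta-1)\e^2$ is strictly positive for every actual lattice ($\beta>1$), and --- crucially --- your (incorrect) leading term is always positive, so the instability half of the theorem for $\kappa^2<\frac12(1-\frac1{\beta(\tau)})$ could never be obtained this way.

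A secondary error masks this: the identity you invoke, $\lan|\Psi_b|^2\ran=(\kappa^2-b)/[\kappa^2\beta(\tau)]+O(\e^4)$, is not the correct bifurcation relation. From the expansion $\psi_b=\e\psi^0+O(\e^3)$ with the normalization $\lan|\psi^0|^2\ran_\cL=2$ and \eqref{relationaver}, one gets $\lan|\Psi_b|^2\ran=2b\e^2+O(\e^4)\approx 2(\kappa^2-b)/[(2\kappa^2-1)\beta(\tau)+1]$, which agrees with your formula only when $\beta=1$. Substituting this wrong identity into your (already incomplete) Rayleigh quotient is what makes your final expression appear to coincide with \eqref{theta}. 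To repair the proof you must carry out the reduction to second order, as in \eqref{fesh}--\eqref{F2secondterm}, retaining the $\xi$--$\alpha$ coupling block; the remaining parts (2) and (3) of your argument are then essentially the paper's.
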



\bigskip

\subsection{Asymptotic stability}

We now assume that we have the case $\kappa^2 > \frac{1}{2}(1 - \frac{1}{\beta(\tau)})$ and prove the asymptotic stability of the Abrikosov lattice $u_b$. To this end we derive and use differential inequalities for  the Lyapunov functional
\be \la{Lyap}
\Lam(t)
  =\frac{1}{2}\lan v, L_b v\ran_{L^2}. 
\end{equation}
\begin{lemma}\label{lem:v-control'}
Let  $(\Psi(t), A(t), \Phi(t))$ be a solution of the Gorkov-Eliashberg-Schmidt equations \eqref{GES} on the time interval $[0,T]$ such that $u(t) := (\Psi(t), A(t))$ satisfies $T_{\gamma(t)}u(t) = u^\tau_b + v(t)$, $\gamma(t) \in \R$, $v \in C^1([0,T]; \Per_b)$, $v(t) \perp \Gamma_b$ and $\|v(0)\|_{H^1} \ll 1$. Then  $\|v(t)\|_{H^1}\le e^{-\delta t/2}\|v(0)\|_{H^1}$ for $\delta < \theta$ for all $t \in [0,T]$, where $\theta$ is as in Theorem \ref{thm:positivity}.
\end{lemma}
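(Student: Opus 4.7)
The plan is to use $\Lam(t)$ itself as a Lyapunov function: derive a differential inequality of the form $\dot\Lam\le-\delta\Lam$ from the equation satisfied by $v$ together with the coercivity and upper bound for $L_b$ supplied by Theorem \ref{thm:positivity}, then convert the exponential decay of $\Lam$ into one for $\|v\|_{H^1}$ via the two-sided bound $c\|v\|_{H^1}^2\le\Lam\le C\|v\|_{H^1}^2$ that follows from parts (2) and (3) of that theorem.

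First I would write the PDE satisfied by $v$. Applying the time-dependent gauge transformation $T_{\g(t)}$ and using the gauge covariance of \eqref{GES}, the pair $u_b+v$ obeys
\[ \pt v = -L_bv - N(v) - \cG(v),\qquad \cG(v):=(i\tilde\Phi(\Psi_b+\xi),\nabla\tilde\Phi), \]
where $\tilde\Phi:=\Phi-\dot\g$ is determined by \eqref{phieq} together with the constraint that $v(t)\perp\G_b$ be preserved (which fixes $\dot\g$), and $N(v):=\E'(u_b+v)-L_bv$ collects the terms quadratic and cubic in $v$. Differentiating $\Lam$ and using the symmetry and $t$-independence of $L_b$,
\[ \dot\Lam = \lan L_bv,\pt v\ran_{L^2} = -\|L_bv\|_{L^2}^2 - \lan L_bv,N(v)\ran_{L^2} - \lan L_bv,\cG(v)\ran_{L^2}. \]

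Next I would control the last two terms. Split $\cG=\cG_0+\cG_1$ with $\cG_0:=(i\tilde\Phi\Psi_b,\nabla\tilde\Phi)$ and $\cG_1:=(i\tilde\Phi\xi,0)$. Twice differentiating the gauge invariance $\E(T_\eta u_b)=\E(u_b)$ yields $L_b\cG_0=0$, whence $\lan L_bv,\cG_0\ran=\lan v,L_b\cG_0\ran=0$. Elliptic estimates for \eqref{phieq} give $\|\tilde\Phi\|_{H^1}\le C\|v\|_{H^1}$, so $\|\cG_1\|_{L^2}\le C\|v\|_{H^1}^2$. The nonlinearity $N(v)$ consists of monomials of degree $2$ and $3$ in $(\xi,\al,\nabla\xi,\nabla\al)$, and two-dimensional Sobolev embedding in the gauge-periodic setting gives $\|N(v)\|_{L^2}\le C\|v\|_{H^1}^2$ once $\|v\|_{H^1}$ is small. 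Cauchy-Schwarz and Young's inequality then produce $\dot\Lam \le -\tfrac12\|L_bv\|_{L^2}^2 + C\|v\|_{H^1}^4$. Since $L_b\ge\theta$ on $\G_b^\perp$, spectral calculus gives $\|L_bv\|_{L^2}^2\ge\theta\lan v,L_bv\ran=2\theta\Lam$, while part (2) of Theorem \ref{thm:positivity} converts $\|v\|_{H^1}^4\le C'\Lam^2$. Hence
\[ \dot\Lam \le -\theta\Lam + C''\Lam^2 = -\Lam(\theta-C''\Lam). \]
A standard bootstrap keeps $\Lam(t)$ small throughout $[0,T]$ whenever $\|v(0)\|_{H^1}$ is small, so $C''\Lam(t)\le\theta-\delta$ for any prescribed $\delta<\theta$, and Gronwall yields $\Lam(t)\le e^{-\delta t}\Lam(0)$. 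The two-sided sandwich $c\|v\|_{H^1}^2\le\Lam\le C\|v\|_{H^1}^2$ from Theorem \ref{thm:positivity}(2)-(3) then produces $\|v(t)\|_{H^1}\le e^{-\delta t/2}\|v(0)\|_{H^1}$ after the customary absorption of the multiplicative constant into a marginal reduction of $\delta$.

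The main technical obstacle will be the clean handling of the gauge term $\cG$: one must verify that $\tilde\Phi$ is well-defined within the gauge-periodic class, that the orthogonality constraint $v\perp\G_b$ pins down $\dot\g$ so that $\|\tilde\Phi\|_{H^1}$ is genuinely controlled by $\|v\|_{H^1}$, and that the kernel identity $L_b\cG_0=0$ is valid with the boundary conditions defining $\Per_b$. Once these structural points are established, what remains is a routine Lyapunov-function computation controlled by the spectral information in Theorem \ref{thm:positivity}.
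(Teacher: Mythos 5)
Your strategy coincides with the paper's: the same Lyapunov functional $\Lambda(t)=\tfrac12\langle v,L_bv\rangle_{L^2}$, the same differential inequality derived from the equation $\partial_t v=-L_bv+N(v)$, the same use of the spectral gap on $\Gamma_b^\perp$ (via fractional powers or the complexified spectral decomposition, giving $\langle v,L_b^2v\rangle_{L^2}\ge 2\theta\Lambda$), Gronwall, and the final conversion to the $H^1$ norm through the two-sided bounds of Theorem \ref{thm:positivity}. Your explicit treatment of the scalar-potential term $\cG(v)$ is in fact more careful than the paper, which absorbs $\Phi$ into the gauge covariance and suppresses it in \eqref{lineareq}.

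There is, however, one step that fails as written: the claim that $\|N(v)\|_{L^2}\le C\|v\|_{H^1}^2$ follows from two-dimensional Sobolev embedding. The nonlinearity \eqref{N} contains the quadratic terms $\alpha\cdot\COVGRAD{A^0_b}\xi$ and $\Im(\bar\xi\,\COVGRAD{A^0_b}\xi)$, which carry one derivative of $v$; in two dimensions $H^1\not\hookrightarrow L^\infty$, and $\|\alpha\,\nabla\xi\|_{L^2}$ cannot be bounded by $\|\alpha\|_{H^1}\|\xi\|_{H^1}$ (Ladyzhenskaya/Gagliardo--Nirenberg gives at best $\|\alpha\|_{L^4}\|\nabla\xi\|_{L^4}\lesssim\|v\|_{H^1}^{3/2}\|v\|_{H^2}^{1/2}$). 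The paper's estimate is correspondingly $\|N(v)\|_{L^2}\lesssim\|v\|_{H^2}(\|v\|_{H^1}+\|v\|_{H^1}^2)$, and the extra $H^2$ norm is then removed via the elliptic bound $\|v\|_{H^2}\lesssim\|L_bv\|_{L^2}+\|v\|_{L^2}$, so that the nonlinear contribution has size $(\|L_bv\|_{L^2}^2+\Lambda)(\Lambda^{1/2}+\Lambda)$ and its $\|L_bv\|_{L^2}^2$ part is absorbed into the retained $-\tfrac12\|L_bv\|_{L^2}^2$ once $\Lambda\ll1$. Since your computation already keeps $-\tfrac12\|L_bv\|_{L^2}^2$ on the right-hand side, this repair fits into your framework verbatim; but as stated the inequality $\dot\Lambda\le-\theta\Lambda+C''\Lambda^2$ does not follow from $H^1$ control of $v$ alone.
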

\begin{proof}
 We plug the decomposition $u(t) = T_{-\gamma(t)}(u_b + v(t))$ into \eqref{GES} 
and use the covariance of this equation with respect to the transformation $ T_{\gamma(t)}$ to obtain the  equation
\be \la{lineareq}
\partial_{t}v 
  =-L_b v +N(v), 
\end{equation}
where $N(v)$ is the nonlinearity given by
 \begin{equation}\label{N}
N(v)=\left( \begin{array}{c} (\n_{A^b_0}\cdot\al+\al\cdot\n_{A^b_0}+|\al|^2)\xi + |\al|^2\Psi_b -\kappa^2(2\Re(\bar\Psi_b\xi)-|\xi|^2)\xi +  \kappa^2|\xi|^2{\Psi_b} \\
  \Im(\bar{\xi}\COVGRAD{A_b}\Psi_b + \Im(\bar{\xi}\COVGRAD{A^b_0}\xi-i\al(\bar{\xi}\Psi_b+\bar \Psi_b\xi) -i\al |\xi|^2) \end{array} \right)
\end{equation}
$ \textrm{for}\ v = \left( \begin{array}{c}\xi\\  \alpha \end{array} \right).$
Using equation \eqref{lineareq}, we obtain
\be \la{endescr}
-\p_t\Lam(t)   =\lan v, L_b^2 v\ran_{L^2} -\lan N(v), L_b v\ran_{L^2} .
\end{equation}
Using the expression for $N(v)$ and Sobolev embedding theorems we obtain easily
 the following rough estimate:
\begin{equation}\label{Nest''}
\|N(v)\|_{L^2} \lesssim \|v\|_{H^2}(\|v\|_{H^1}+\|v\|_{H^1}^2),
\end{equation}
which 
implies that
 \begin{equation}\label{Nest}
|\lan N(v), L_b v\ran_{L^2}| \lesssim \| L_b v\|_{L^2}\|v\|_{H^2}(\|v\|_{H_1}+\|v\|_{H^1}^2).
\end{equation}
Since, by \eqref{L-upp-bound}, $\|v\|_{H^2} \lesssim \| L_b v\|_{L^2}+\|v\|_{L^2}$ and $\|v\|_{H^1}^2 \lesssim \Lam(t)$, the last inequality implies
 \begin{equation}\label{Nest'}
|\lan N(v), L_b v\ran_{L^2}| \lesssim (\| L_b v\|^2_{L^2}+\Lam(t))(\Lam(t)^{\frac{1}{2}}+\Lam(t)).
\end{equation}
Next, we think of $L_b$ as the restriction to the orthogonal complement of $\Gamma_b$ and define $L_b^{\al}$, for $0 < \alpha < 1$, by the formula
$L_b^{\al}=C\int_0^\infty (\frac{1}{\w} -\frac{1}{L_b+\w})\w^\al d \w$, where  $C^{-1}:=\int_0^\infty (\frac{1}{\w} -\frac{1}{1+\w})\w^\al d \w$. (Another way to proceed is to use the complex-linear extension of $L_b$ constructed in Subsection \ref{subsec:compl}.) Then writing $\lan v, L_b^2 v\ran_{L^2}=\lan L_b^{\frac{1}{2}}v, L_b L_b^{\frac{1}{2}}v\ran_{L^2}$ and using \eqref{L-low-bound}, we find $\lan v, L_b^2 v\ran_{L^2}\ge \theta\lan v, L_b v\ran_{L^2}=2\theta \Lam(t)$. Using this, we obtain
\be \la{Lyap-DI}
-\p_t \Lam(t)    \ge \theta\Lam(t)+\big[\frac{1}{2}-c(\Lam(t)^{\frac{1}{2}}+\Lam(t))\big]\| L_b v\|^2_{L^2}- \Lam(t)^{\frac{3}{2}}-\Lam^2(t).
\end{equation}
If we now assume that $\Lam(t)\ll 1$, then this gives
\be \la{Lyap-DI'}
-\p_t(e^{\delta t}\Lam(t))    \ge(\theta-\delta)e^{\delta t}\Lam(t).
\end{equation}
Integrating the last inequality from $0$ to $t$, one finds
 \be \la{Lyap-est}
 \Lam(0)  \ge e^{\delta t}\Lam(t) +(\theta-\delta)\int_0^te^{\delta s}\Lam(s)d s
\end{equation}
and in particular $\Lam(t)\le e^{-\delta t}\Lam(0)$ for $\delta < \theta$. Taking  $\Lam(0))\ll 1$, we see that our assumption $\Lam(t))\ll 1$ is justified, which completes the argument. Finally appealing again to \eqref{L-upp-bound} shows that $\|v(t)\|_{H^1}\le e^{-\delta t/2}\|v(0)\|_{H^1}$ for $\delta < \theta$.
\end{proof}
To finish the proof of asymptotic stability, let $\delta_0$ be given by Proposition \ref{prop:decomposition} and let $u(0) \in U_{\frac{1}{2}\delta_0}$. By standard parabolic existence theory (see e.g. \cite{LSU, Lieberman}) for the equation \eqref{lineareq}, written for the real and imaginary parts of $v$, 
has a unique solution $u(t) \in U_{\delta_0}$ for $t\le T$, for some $T > 0$. Let $T_*$ be the supremum of such $T$. If $T_*< \infty$, then 
$u(T_*) \in \p U_{\delta_0}$. Then by Proposition \ref{prop:decomposition}, there is $\gamma(t) \in \R$ so that $T_{\gamma(t)}u(t) = u_b + v(t)$, with $v(t) \in C^1([0,T]; \Per_b)$ and $v(t) \perp \Gamma_b$. By Lemma \ref{lem:v-control'}, $\|v(t)\|_{H^1}\le e^{-\delta t/2}\|v(0)\|_{H^1}\le \frac{1}{2}\delta_0$ for all $t \in [0,T_*]$, which contradicts $u(T_*) \in \p U_{\delta_0}$. Hence $T_*= \infty$ and $u(t) \in U_{\delta_0/2}$ for all $t$ and $\|v(t)\|_{H^1}\le e^{-\delta t/2}\|v(0)\|_{H^1} \ra 0$, as $t\ra\infty$. Since $u(t) = T_{-\gamma(t)}u_b + T_{-\gamma(t)}v(t)$, we obtain (after replacing $-\gamma(t)$ by $\gamma(t)$)
 \be \la{as-stab}
 \|u(t) - T_{\gamma(t)}u_b\|_{H^1}\le e^{-\delta t/2}\|u(0) - T_{\gamma(t)}u_b\|_{H^1} \ra 0,
\end{equation}
 as $t\ra\infty$, and this completes the proof of asymptotic stability.
 $\Box$
\DETAILS{We can also use a bootstrap starting with the Lyapunov functional
\be \la{enexp'}
\Lam'(t)
  =\frac{1}{2}\lan v,  v\ran_{L^2} , 
\end{equation}
and then using a bootstrap, possibly with an estimate on $\|v\|_{H_1}$ obtained by the energy method.}


\subsection{Proof of instability}

We now assume that we have the case $\kappa^2 < \frac{1}{2}(1 - \frac{1}{\beta(\tau)})$ and prove the instability of the Abrikosov lattice $u_b$. Let $\eta \in \Per_b$ be the negative eigenvector of $L_b$, corresponding to the eigenvalue $-\lam = \theta < 0$, given in Theorem \ref{thm:positivity}, so that $L_b \eta = - \lam \eta$. We normalize $\eta$ so that $\|\eta\|_{L^2} = 1$.

For $\delta > 0$ we now define $u(t)$ to be the solution with the initial datum $u_{\delta,0} = u_b + \delta \eta$. We write this solution as $u(t) = u_b + v_\delta(t)$. Then $v_\delta(t)$ satisfies the equation \eqref{lineareq} with the initial condition $v_{\delta,0} =  \delta \eta$. Using the Duhamel principle and the fact that $e^{-L_b t} \eta= e^{-\lam t} \eta$ we rewrite the latter equation in the form
\begin{equation} \label{duh}
v_\delta(t) =  \delta e^{\lam t}\eta + \int_{0}^{t}
e^{-L_b (t - s)}N(v_\delta(s))\ ds,
\end{equation}
where 
$N(v)$ is the nonlinearity given in \eqref{N}.
It satisfies the following estimate:
\begin{eqnarray} \label{non-est}
  \| N(v)\|_{ L^2} \lesssim \|v\|_{H^1}^2 + \|v\|_{H^1}^3 .
\end{eqnarray}
Next, 
for an appropriate large constant $c>\lam$,
we have by the standard elliptic theory, similarly to \eqref{L-upp-bound},
$\|  w \|_{H^1}^2 \le  \lan w, (L_b +c)w \ran_{ L^2}$.
The self-adjoint complex-linear extension of the operator $L_b$ obtained in the next section and the invariance of the image of $\cP_b$ under this extension imply the spectral decomposition for $L_b$, which gives that $\lan w, (L_b +c)e^{-2L_b t}w\ran \le (-\lam +c)e^{-2\lam t}\| w \|_{ L^2}^2$. The last two estimates imply the bound $\| e^{-L_b t} w \|_{H^1} \lesssim  e^{\lam t}\| w \|_{ L^2}$. Using the latter bound and \eqref{non-est} and writing in the rest of the proof $\| v\|$ for $\| v \|_{H^1}$, we obtain
\begin{eqnarray} \label{est1}
\| v_\delta(t)- \delta e^{\lam t} \eta \| &\le &
\int_0^t \| e^{-L_b (t-s)} N(v_\delta(s)) \| ds \\
&\lesssim &  \int_0^t e^{\lam (t-s)} \| {N}(v_\delta(s)) \|_{ L^2} ds \\
& \lesssim &   \int_0^t e^{\lam (t-s)}
(\|v_\delta(s)\|^2 + \|v_\delta(s)\|^3) ds.
\end{eqnarray}

Now, let $M$ be a constant satisfying 
$0<M<\| {\eta} \| =1$ 
and define
\begin{equation} \label{T1def}
T_1:= \sup \{ s: 
\| v_\delta(s) - \delta e^{\lam s} \eta \| \le M \delta e^{\lam s} \}.
\end{equation}
Clearly, $T_1>0$, and so for $0 \le t \le T_1$, we have by
triangle inequality
\begin{equation} \label{eq:bound}
\| v_\delta(s) \| \le \| v_\delta(s)-  \delta e^{\lam s} {\eta}  \| +
\delta e^{\lam s} \| \eta \| \le (M+1) \delta e^{\lam s}.
\end{equation}
Therefore, by  \eqref{est1} and \eqref{eq:bound}, we have for $0 \le t \le T_1$,
\begin{eqnarray}
\| v_\delta(t)-  \delta e^{\lam t} \eta  \| &\lesssim &
 \int_0^t e^{\lam (t-s)} (\delta^2 e^{2\lam s} + \delta^3 e^{3\lam s})ds   \nonumber \\
& = &   \delta^2 e^{\lam t} \int_0^t e^{\lam s} (1+ \delta e^{\lam s}) ds.  \nonumber
\end{eqnarray}
We  choose $T_2$ to satisfy $\delta e^{\lam T_2} =1$. Then
\begin{equation} \label{eq:key_estimate3}
\| v_\delta(t)-  \delta e^{\lam t} \eta  \|\le    C \delta^2 e^{2\lam t},
\;\;\;\;\mbox{for} \; \; 0 \le t \le \min(T_1, T_2).
\end{equation}
Pick $C' \ge M,\ C$ and define the constant $T^\delta>0$ by the relation
\begin{equation} \label{eq:onehalf2}
  C' \delta e^{\lam T^\delta}=M.
\end{equation}
Note that, since $C' \ge M$, we have $T^\delta < T_2$. 
We claim that $T^\delta \le T_1$.  If not, then $T_1 < T^\delta \le T_2$, and by
\eqref{eq:key_estimate3} and \eqref{eq:onehalf2},
\begin{eqnarray*}
\| v_\delta(T_1) -  \delta e^{\lam T_1} \eta  \| <
M \delta e^{ \lam T_1}.
\end{eqnarray*}
But this result contradicts the definition of $T_1$ in \eqref{T1def}.
Hence $T^\delta \le T_1$, and therefore $T^\delta \le \mbox{min}(T_1, T_2)$. Now we have by the triangle inequality, \eqref{eq:key_estimate3}, \eqref{eq:onehalf2} and the condition $\|\eta \|=1$,
\begin{eqnarray*}
\| v_\delta(T^\delta) \| & \ge & \| \delta e^{\lam T^\delta}
\eta \| - \| v_\delta(T^\delta) -  \delta e^{\lam
T^\delta} \eta  \| \\
& \ge & \delta e^{ \lam T^\delta}  - \delta
e^{ \lam
T^\delta}  M.
\end{eqnarray*}
If we set $\nu := (1 - M) \frac{M}{C'}>0$, then the last equation, together with  \eqref{eq:onehalf2}, implies $\| v_\delta(T^\delta) \|  \ge \nu$, which can be rewritten as 
\begin{equation}\label{est2}
    \| u(T^\delta) - u_b \| \geq \nu.
\end{equation}
Now we note that for $\delta$ sufficiently small, 
the unique minimizer $\g_*$ of $\| u - T_\gamma u_b \|^2$ satisfies  $\g_*=O(\delta)$ and therefore \eqref{est2} implies
\begin{equation}\label{est3}
    \inf_{\g}\| u(T^\delta) - T_\gamma u_b \| \geq \nu-O(\delta).
    \end{equation}
In other words, $\forall \delta>0$ sufficiently small, there is $u_0\in U_\delta$ such that $u(T^\delta)\notin U_{\frac{1}{2}\nu},\ \forall t\ge 0,$ for a fixed $\nu>0$ independent of $\delta$.
\DETAILS{Since $\eta \perp \Gamma^\tau_b$, we have that for all $\gamma$,
\begin{align*}
    \| u(t) - T_\gamma u_b \|_{L^2}
    &\geq \| u_b + \delta e^{\lambda t}\eta - T_\gamma u_b \|_{L^2}
    - \| u(t) - (u^\tau_b + \delta e^{\lambda t}\eta) \|_{L^2} \\
    &\geq \delta e^{\lambda t} - \frac{1}{2} \delta e^{\lambda t}
    = \frac{1}{2} \delta e^{\lambda t},
\end{align*}
as long as $\delta e^{\lambda t} \|\eta\|_{H^1} < \delta_0$. This implies instability.}
 This implies instability.
\bigskip


\section{Estimates on Hessian}
\label{sec:positivity}

In this section we prove Theorem \ref{thm:positivity} concerning the positivity of the Hessian $L_b$. In what follows we omit the subindex $L^2$ in the inner products and norms. 


\subsection{Shifted Hessian}

We first consider a shifted Hessian $\tilde{L}_b$, which induces the same quadratic form. Note that since $\DIV \alpha = 0$, we have $\CURL^* \CURL \alpha = -\Delta \alpha$. Again using $\DIV \alpha = 0$, we can integrate by parts to see that
\begin{equation*}
    \int_{\Omega_b} - \alpha\cdot\Im(\bar{\xi}\COVGRAD{A_b}\Psi_b + \bar{\Psi}_b\COVGRAD{A^\tau_b}\xi) )
    = \int_{\Omega_b} - 2 \alpha \cdot \Im(\bar{\xi}\COVGRAD{A_b}\Psi_b).
\end{equation*}
Therefore we have $\lan v, L_b v \ran_{L^2} = \lan v, \tilde{L}_b v \ran_{L^2}$ where $\tilde{L}_b$ is the operator on the same domain $\cP_b$ given by
\begin{equation}\label{hessian-shifted}
    \tilde{L}_b v =
    \left( \begin{array}{c}  -\COVLAP{A_b}\xi + \kappa^2 (2|\Psi_b|^2 -1)\xi + \kappa^2{\Psi_b}^2\bar{\xi} + 2i(\COVGRAD{A_b}\Psi_b)\cdot\alpha, \\
    -\Delta \alpha + |\Psi_b|^2\alpha - 2 \Im(\bar{\xi}\COVGRAD{A_b}\Psi_b) . \end{array}\right)
\end{equation}
It will be more convenient to work with $\tilde{L}_b$.

\subsection{Rescaling}
\label{sec:rescaling}

We now rescale the problem in order to exploit the analytic properties of the Abrikosov lattice solutions. Given a pair $(\Psi, A)$ of type $(\tau, b)$, we set $\sigma := \sqrt{\frac{1}{b}} = r^\tau_b \sqrt{\frac{\Im\tau}{2\pi}}$, and introduce the rescaling $U_\s : (\Psi(x), A(x)) \mapsto (\s \Psi(\sigma x), \s A(\sigma x))$. This has the effect that the rescaled state, $(\psi, a) := U_\s (\Psi, A)$, is of type $(\tau, 1)$.
It is easy to verify that $U_\s$ is a linear unitary bijection between $\Per_b$ and $\Per_1$ (in particular, it preserves the $L^2$ inner-product, i.e., $ \lan U_\s v, U_\s v' \ran = \lan v, v' \ran.$
Moreover it preserves the orthogonality condition in the sense that for $v \in \Per_b$, $v \perp \Gamma_b$ if and only if $U_\s v \perp \Gamma_1$).

We note that the rescaled Abrikosov lattice solution $(\psi_b, a_b) := U_\s (\Psi_b, A_b)$ satisfies the rescaled Ginzburg-Landau equations
\begin{equation}\label{glResc}
\begin{cases}
    (-\COVLAP{a} - \lam_b)\psi + \kappa^2 |\psi|^2\psi=0, \\
    \CURL^*\CURL a - \Im(\overline{\psi}\COVGRAD{a} \psi)=0,
\end{cases}
\end{equation}
where $\lam_b = \frac{\kappa^2}{b}$, and the quasiperiodic boundary conditions
\begin{equation}\label{ALpercond}
    \psi_b(x + \nu) = e^{\frac{i}{2} \nu \cdot J x} \psi_b(x),\ \quad  \textrm{and}\ \quad  a_b(x+\nu) = a_b(x) +\frac{1}{2}J\nu,
\end{equation}
where $\nu$ is either of the basis vectors of $\LAT_1$, as well as $\divv a_b=0$ and $\lan a_b - \frac{1}{2}Jx \ran_{\cL_1}=0$. 
Thus $(\psi_b, a_b)$ are of type $(\tau, 1)$. 

We now define the rescaled Hessian to be $L_b^{\textrm{resc}} : =\s^2 U_\s \tilde{L}_b {U_\s}^{-1}$. With $v = \left( \begin{array}{c} \xi\\ \alpha \end{array}\right)$, it is explicitly given by
\begin{equation}\label{eq:rescaled-hessian}
    L_b^{\textrm{resc}} v = \left( \begin{array}{c}   -\COVLAP{a_b}\xi - \lambda_b \xi + 2\kappa^2|\psi_b|^2\xi + \kappa^2{\psi_b}^2\bar{\xi} + 2i\alpha\cdot\COVGRAD{a_b}\psi_b\\
     -\Delta \alpha + |\psi_b|^2 \alpha - 2\Im(\bar{\xi}\COVGRAD{a_b}\psi_b)  \end{array}\right).
\end{equation}

For the rest of this section we 
 write $\cL,\ \Om,\ \Per$ for $\cL_1,\ \Om_1,\ \Per_1$.


\subsection{Complexification}
\label{subsec:compl}

 In order to freely use the spectral theory, it is convenient to pass from the real-linear operator $L_b^{\textrm{resc}}$ to a complex-linear one. To this end we complexify the space $\Per$ and extend the operator $L_b^{\textrm{resc}}$ to the new spaces. We first identify $\alpha : \R^2 \to \R^2$ with the function $\alpha^\C = \alpha_1 - i\alpha_2 : \R^2 \to \C$.
(Whenever it does not cause confusion we drop the $^\C$ superscript from the notation.)
We note that $\alpha \cdot \alpha' = \Re(\bar{\alpha}^\C {\alpha'}^\C)$. We also introduce the differential operator $\partial = \partial_{x_1} - i\partial_{x_2}$. We note that $\bar{\partial} \alpha^\C = \DIV\alpha - i\CURL\alpha$. where the $\bar{\partial}$ denotes complex conjugate operator. In general, for an operator $A$, we write $\bar{A} := \mathcal{C} A \mathcal{C}$, where $\mathcal{C}$ denotes complex conjugation.

We now consider the complex Hilbert space $L^2(\Om, \C^4)$ of vectors $(\xi, \phi, \alpha, \omega)$, with the usual $L^2$ inner product
\begin{equation*}
    \lan v, v' \ran = \lan \bar{\xi}\xi' + \bar{\phi}\phi' + \bar{\alpha}\alpha' + \bar{\omega}\omega' \ran_\LAT
\end{equation*}
The original space $L^2(\Om, \C)\times L^2(\Om, \R^2)$, on which $L_b^{\textrm{resc}}$ is defined, is embedded in $L^2(\Om, \C^4)$ via the injections
\begin{equation}\label{embed}
\pi_\pm:\	(\xi, \alpha) \mapsto 
\frac{1}{\sqrt{2}} (\xi, \pm\bar{\xi}, \alpha, \pm\bar{\alpha}),
\end{equation}
with inverses, $\pi_\pm^{-1}$, given by the obvious projection. $V_\pm:=\Ran \pi_\pm$ are real spaces spanning  $L^2(\R^2, \C^4)$:
\begin{equation}\label{Vpm-deco}
    (\xi, \phi, \alpha, \omega) = \frac{1}{2}(\xi + \bar{\phi}, \bar{\xi} + \phi, \alpha + \bar{\omega}, \bar{\alpha} + \omega) + \frac{1}{2}(\xi - \bar{\phi}, -\bar{\xi} + \phi, \alpha - \bar{\omega}, -\bar{\alpha} + \omega).
\end{equation}
Moreover, the map $I: (\xi, \phi, \alpha, \omega)\ra  (i\xi, i\phi, i\alpha, i\omega)$ acts between $V_\pm$: $I: V_\pm \ra V_\mp$. This embedding transfers the operator $L_b^{\textrm{resc}}$ to the range, $V_\pm:=\Ran \pi_\pm$, of this  injection. We denote the resulting operator by $L_b^{\textrm{transf}}$. Its domain is $\pi\Per$. We want to extend it to $L^2(\R^2, \C^4)$. 
To this end it is convenient to rewrite the operator $L_b^{\textrm{transf}}$ 
in complex notation. We introduce the notation $\partial_{a^\C} = \p - ia^\C$. Straightforward calculations show that
\begin{equation*}
	2i\alpha\cdot\COVGRAD{a_b}\psi_b
		= -i(\partial_{a_b^\C}^*\psi_b)\alpha^\C + i(\partial_{a_b^\C}\psi_b)\bar{\alpha}^\C,
\end{equation*}
and that
\begin{equation*}
	- \Im(\bar{\xi}\COVGRAD{a_b}\psi_b)^\C
		= \frac{i}{2}(\overline{\partial_{a_b^\C}^*\psi_b})\xi
				+ \frac{i}{2}(\partial_{a_b^\C}\psi_b)\bar{\xi}.
\end{equation*}
Using the above relations we rewrite the operator $L_b^{\textrm{transf}}$ and then define its complex-linear extension, denoted by $K_b$, by the resulting matrix
\begin{equation}\label{K}
	K_b = \left( \begin{array}{cccc}
		-\COVLAP{a_b} - \lambda_b + 2\kappa^2|\psi_b|^2
					& \kappa^2\psi_b^2
					& -i(\partial_{a_b}^*\psi_b) & i(\partial_{a_b}\psi_b) \\
		\kappa^2\bar{\psi}_b^2
				& -\overline{\COVLAP{a_b}} - \lambda_b + 2\kappa^2|\psi_b|^2
					& -i(\overline{\partial_{a_b}\psi_b}) & i(\overline{\partial_{a_b}^*\psi_b}) \\
		i(\overline{\partial_{a_b}^*\psi_b})
			& i(\partial_{a_b}\psi_b)
			& -\Delta + |\psi_b|^2 & 0 \\
		-i(\overline{\partial_{a_b}\psi_b})
			& -i(\partial_{a_b}^*\psi_b)
			& 0 & -\Delta + |\psi_b|^2
		\end{array} \right)
\end{equation}
on the domain which 
consists of all $v = (\xi, \phi, \alpha, \omega)\in H^2(\R^2, \C^4)$, with $\xi$, $\bar\phi$, and $\alpha$, $\bar\omega$ satisfying the quasi-periodic boundary conditions
\begin{equation}\label{ALpercond'}
    \chi(x + \nu) = e^{\frac{i}{2} \nu \cdot J x} \chi(x),\ \quad  \textrm{and}\ \quad  \s(x+\nu) = \s(x),
\end{equation}
where, as above, $\nu$ is either of the basis vectors of $\LAT$, as well as $\alpha$ and $\omega$ are divergence free and have mean-zero,  $\divv \s=0$ and $\lan \s\ran_{\cL}=0$.  
(Note that similarly to the Riesz - Fischer $L^2-$space on a torus (see e.g. \cite{McO}), we could have used results of \cite{TS} to introduce $L^2-$space on $\Om$ satisfying the quasiperiodic conditions \eqref{ALpercond'}, rather than periodic ones.)

The operator $K_b$ is clearly complex-linear, self-adjoint, has purely discrete spectrum, and, as it is not hard to check, satisfies
\begin{equation}\label{Kb-restr}
V_\pm\ \textrm{are invariant under}\ K_b\ \quad  \textrm{and}\ \quad    K_b|_{V_+}=L_b^{\textrm{transf}},
\end{equation}
\begin{equation}\label{Kb-deco}
K_b= K_b|_{V_+}+K_b|_{V_-},\ \quad \s(K_b|_{V_+})=\s(K_b|_{V_-})\ \quad \textrm{and}\  \ \quad \s(K_b)=\s(K_b|_{V_+})\cup \s(K_b|_{V_-}).
\end{equation}
For the second equation, we used \eqref{Vpm-deco} and that $K_b$ obviously commutes with $I$. \eqref{Kb-restr} implies
\begin{equation}\label{qfrel}
\lan v, L_b^{\textrm{resc}} v \ran =	\lan \pi_+ v, L_b^{\textrm{transf}} \pi_+ v \ran = \lan \pi_+ v, K_b \pi_+ v \ran.
\end{equation}
%
\DETAILS{Though going from $L_b^{\textrm{transf}}$ to $K_b$ we doubled the size of the matrix, the latter is symmetric with respect to interchange of $\xi, \al$ and $\bar\xi, \bar\al$, which allows us to split the spaces on which it is defined into two invariant subspaces. Indeed, we define the real-linear operator $\gamma = \mathcal{C}\cS$, where $\cS$ is the operator given by
\begin{equation}\label{S}
	\cS = \left( \begin{array}{cccc}
		0 & 1 & 0 & 0 \\
		1 & 0 & 0 & 0 \\
		0 & 0 & 0 & 1 \\
		0 & 0 & 1 & 0
	\end{array} \right)
\end{equation}
and notice that 
\begin{enumerate}[(a)]
	\item $[ K_{b}, \gamma ] = 0$,
	\item  $\gamma^2 = 1$,
    \item  $\gamma$ leaves the space $ \mathcal{K}$ invariant,
    \item  $\gamma$ has eigenvalues $\pm 1$ and the corresponding subspaces span the entire space $\mathcal{K}$,
     \item  $\{\gamma=1\}= \pi L^2(\R^2, \C^4)$.
\end{enumerate}
To see the last property we note that the eigenspaces of $\g$ corresponding to the eigenvalues $\pm 1$ consist of vectors of the form $(\xi, \pm\bar{\xi}, \alpha , \pm\bar{\alpha})$ and such vectors span $L^2(\R^2, \C^4)$,
\begin{equation}\label{compldeco}
    (\xi, \phi, \alpha, \omega) = \frac{1}{2}(\xi + \bar{\phi}, \bar{\xi} + \phi, \alpha + \bar{\omega}, \bar{\alpha} + \omega) + \frac{1}{2}(\xi - \bar{\phi}, -\bar{\xi} + \phi, \alpha - \bar{\omega}, -\bar{\alpha} + \omega).
\end{equation}
The eigenspaces of $\gamma$ on $\mathcal{K}$, which we denote as $V_{\rho}$,  $\rho=\pm$, are invariant under the operator $K_b$. Hence it suffices to study the  restrictions $K_{b \rho}$ of $K_{b}$ to these invariant subspaces, $V_\rho,\ \rho=\pm$. A calculation shows that $(\xi,\alpha)^\C \in V_{+}$.
Note that $V_\rho$ is not a complex vector space and $iV_\rho = V_{-\rho}$.} 
%

%
%
\DETAILS{\subsection{Creation and annihilation operators}

The operators $\al_0^*:=-\partial_{a_0}$ and $\al_0:=-\partial_{a_0}^*$ are the creation and annihilation operators,  respectively, associated to $\COVLAP{a_0}$:
\begin{equation*}
	-\COVLAP{a_0}+i\p^*a_0= \al_0^*\al_0,\ [\al_0, \al_0^*]=2\CURL a_0.
\end{equation*}
(Recall that, due to rescaling, $\CURL a_0=n=1$ and note that to go to the complex representation we use
$\CURL a +i\div a= i\bar\p a^\C$. 
In general, we have that
 \begin{equation*}
	-\COVLAP{a}-\CURL a= \partial_{a}\partial_{a}^*,\ [\partial_{a}^*, \partial_{a}]=2\CURL a.
\end{equation*}}
%

\subsection{Perturbation theory}

 It is shown in \cite{TS} that for each $\tau$ there is $\e_0> 0$, such that the solutions $(\psi_b, a_b)$  form a real-analytic branch of solutions in the (bifurcation) parameter $\e \in [0,\e_0] $ defined in \eqref{parameter} and
have the following expansions for 
(see \cite{TS} where a more general result was derived):
\begin{equation*}
	\begin{cases}
		\psi_b = \e \psi^0 + \e^3 \psi^1 + O(\e^5), \\
		a_b = a^0 + \e^2 a^1 + O(\e^4),
	\end{cases}
\end{equation*}
where 
$\psi^0$ and $a^1$ satisfy the following relations
\begin{equation}\label{c0vac}
	\partial_{a^0}^*\psi^0 =0,\  \langle |\psi^{0} |^2 \rangle_\cL =2, 
\end{equation}
\begin{equation}\label{curla1}
	i\bar\p a^1
		= \frac{1}{2} (\lan |\psi^0|^2 \ran_\cL - |\psi^0|^2),
\end{equation}
\begin{equation}\label{a1eq}
    \Delta a^1 = \frac{i}{2}\bar{\psi}^0 (\partial_{a^0}\psi^0).
\end{equation}
 We now prove the following proposition.
\begin{proposition}\label{prop:qfKbnd}
    Let $\Gamma_b^c$ be the gauge zero mode in the extended space: $\Gamma_b^c = (i\psi_b, -i\bar{\psi}_b, 0, 0)\in V_+$. For $\e>0$ sufficiently small, the lowest eigenvalue, $\theta$, of $K_+:=K_b|_{V_+} \equiv L_b^{\textrm{transf}}$ on the subspace $\{v \in D(K_+)\ |\ v \perp \Gamma\}$ is of the form
    \begin{equation}\label{lowest-ev-K}
        \theta:= 2\left[\left(2\kappa^2 - 1\right)\beta(\tau) + 1 \right] \e^2+ O(\e^3).
    \end{equation}
 $\theta$   is also the lowest eigenvalue of $L_b^{\textrm{resc}}$ on $(\Gamma_b^c)^{\perp}$ and therefore  $b \theta$  is the lowest eigenvalue of $L_b$ on $\G_b^{\perp}$.
    \end{proposition}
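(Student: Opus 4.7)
The plan is to apply degenerate Rayleigh–Schrödinger perturbation theory to the self-adjoint operator $K_+ := K_b|_{V_+}$ in the small parameter $\e$ of \eqref{parameter}, using the analytic expansions $\psi_b = \e\psi^0 + O(\e^3)$, $a_b = a^0 + \e^2 a^1 + O(\e^4)$, $\lambda_b = 1 + \e^2\Lambda + O(\e^4)$ with $\Lambda := (2\kappa^2-1)\beta(\tau) + 1$ (the last following from inverting the definition of $\e$). At $\e = 0$, $K_0 := K_+|_{\e=0}$ is block-diagonal, with $(-\COVLAP{a^0} - 1)$ on $\xi$, its conjugate on $\phi$, and $-\Delta$ on the divergence-free mean-zero sector of $(\alpha,\omega)$. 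Its spectrum on $V_+$ is discrete and nonnegative; the null space is two-real-dimensional, spanned by the unit-norm vectors $e_1 := \tfrac{1}{2}(\psi^0,\bar\psi^0,0,0)$ and $e_2 := \tfrac{1}{2}(i\psi^0,-i\bar\psi^0,0,0)$, and the remainder of the spectrum is separated from $0$ by the smaller of the Landau gap $2$ and the spectral gap of $-\Delta$ on divergence-free mean-zero periodic fields. By Kato–Rellich stability, $K_+$ retains exactly two eigenvalues near $0$ for $\e$ small. One is identically $0$ with eigenvector $\Gamma_b^c = \sqrt{2}\,\e\,e_2 + O(\e^3)$ (by the global gauge invariance), so $\theta$ is the other, carried to leading order by $e_1$ after projecting off $\Gamma_b^c$.

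To compute $\theta$ I would expand $K_+ = K_0 + \e K_1 + \e^2 K_2 + O(\e^3)$: $K_1$ consists only of the off-diagonal blocks coupling $(\xi,\phi)$ to $(\alpha,\omega)$ through $i(\partial_{a^0}\psi^0)$ and its conjugate (the $\partial_{a^0}^*\psi^0$-entries vanish by \eqref{c0vac}); $K_2$ collects the $(\xi,\phi)$-diagonal $V_1 - \Lambda + 2\kappa^2|\psi^0|^2$ (with $V_1 := 2i a^1\cdot\nabla + 2a^0\cdot a^1$ the $\e^2$-piece of $-\COVLAP{a_b}$, simplified by $\divv a^1 = 0$) and the off-diagonal $\kappa^2(\psi^0)^2$ together with its conjugate. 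Since $K_1$ sends the $(\xi,\phi)$-part of $V_+$ into its $(\alpha,\omega)$-part, the first-order term $\langle e_1, K_1 e_1\rangle$ vanishes. The resolvent equation $K_0 w = -K_1 e_1$ in the $(\alpha,\omega)$-sector becomes a Poisson equation whose right-hand side, rewritten in the complex form $(J_S(\psi^0))^\C = -\tfrac{i}{2}\bar\psi^0\partial_{a^0}\psi^0$ using $\partial_{a^0}^*\psi^0 = 0$, matches (up to a numerical factor) the Ginzburg–Landau equation \eqref{a1eq} for $a^1$; this identifies $w$ with a scalar multiple of $(0,0,-a^{1,\C},-\overline{a^{1,\C}})$. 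A parallel integration by parts, using $\bar\psi^0\nabla\psi^0 = \bar\psi^0\COVGRAD{a^0}\psi^0 + i|\psi^0|^2 a^0$ and $\divv a^1 = 0$, yields the key identity $\langle\psi^0, V_1\psi^0\rangle_{L^2} = -2\|\nabla a^1\|_{L^2}^2$.

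Summing the direct contribution $\langle e_1, K_2 e_1\rangle$ and the second-order contribution $-\langle K_1 e_1, w\rangle$, one obtains
\[
\theta = \e^2\Big\{\tfrac{3}{2}\kappa^2\langle|\psi^0|^4\rangle_\cL - \Lambda - 3\|\nabla a^1\|_{L^2}^2\Big\} + O(\e^3).
\]
The decisive algebraic step is then to use \eqref{curla1} with $\divv a^1 = 0$ and $\langle|\psi^0|^2\rangle_\cL = 2$ to obtain $\|\nabla a^1\|_{L^2}^2 = \|\CURL a^1\|_{L^2}^2 = \tfrac{1}{4}\langle|\psi^0|^4\rangle_\cL - 1$; substituting this and $\Lambda = \tfrac{2\kappa^2-1}{4}\langle|\psi^0|^4\rangle_\cL + 1$ (equivalent to the definition of $\beta(\tau)$) collapses the braces to exactly $2\Lambda$, yielding $\theta = 2\Lambda\e^2 + O(\e^3)$. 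The final two statements of the proposition then follow from the intertwining \eqref{qfrel} (an isometric real-linear embedding of $(\Per, L_b^{\textrm{resc}})$ into $(V_+, K_+)$ carrying $\Gamma$ to $\Gamma_b^c$) and the rescaling $L_b = b\, U_\sigma^{-1} L_b^{\textrm{resc}} U_\sigma$ (with $U_\sigma$ unitary), both preserving the relevant orthogonality. I expect the main obstacle to be less conceptual than book-keeping: carefully verifying $\langle\psi^0, V_1\psi^0\rangle_{L^2} = -2\|\nabla a^1\|_{L^2}^2$ and the identification of $w$ with $a^1$ in the complex formalism of Section \ref{sec:positivity}, and confirming that the Kato–Rellich isolation hypothesis holds uniformly in $\e$.
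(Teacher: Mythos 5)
Your proposal is correct and follows essentially the same route as the paper: an $\e$-expansion of the transferred Hessian about the two-dimensional kernel of $K^0|_{V_+}$, vanishing of the first-order term because $W^1$ only couples the $(\xi,\phi)$- and $(\alpha,\omega)$-sectors, and evaluation of the second-order terms via the identities \eqref{curla1}--\eqref{a1eq} linking $a^1$ to $\psi^0$ (the paper packages the second-order computation as a Feshbach--Schur map rather than Rayleigh--Schr\"odinger theory, but at this order the two are identical, and your intermediate expression indeed collapses to the paper's $\cF_2 = 2[(2\kappa^2-1)\beta(\tau)+1]$). If anything, your explicit treatment of the two-real-dimensional degeneracy in $V_+$, with one direction pinned at $0$ by gauge invariance, is cleaner than the paper's statement that $0$ is a ``simple'' eigenvalue of $K^0_+$.
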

\begin{proof}
We use the expansions above to  expand $K_b$ in powers of $\e$ and the relation $\partial_{a^0}^*\psi^0 = 0$, to simplify the resulting terms to obtain $K_b = K^0 + \e W^1 + \e^2 W^2 + o(\e^3)$, where
\begin{equation}\label{K0}
	K^0 = \left( \begin{array}{cccc}
		-\COVLAP{a^0} - 1 & 0 & 0 & 0 \\
		0 & -\overline{\COVLAP{a^0}} - 1 & 0 & 0 \\
		0 & 0 & -\Delta & 0 \\
		0 & 0 & 0 & -\Delta
		\end{array} \right),
\end{equation}
\begin{equation} \label{W0}
	W^1
	= \left( \begin{array}{cccc}
		0 & 0 & 0 & i(\partial_{a^0}\psi^0) \\
		0 & 0 & -i(\overline{\partial_{a^0}\psi^0}) & 0 \\
		0 & i(\partial_{a^0}\psi^0) & 0 & 0 \\
		-i(\overline{\partial_{a^0}\psi^0}) & 0 & 0 & 0
		\end{array} \right),
\end{equation}
\begin{equation}\label{W1}
	W^{2} = \left( \begin{array}{cccc}
			 B^0
				& \kappa^2(\psi^0)^2 & 0 & 0 \\
			\kappa^2(\bar{\psi}^0)^2
				& \overline{B^0} & 0 & 0 \\
			0 & 0 & |\psi^0|^2 & 0 \\
			0 & 0 & 0 & |\psi^0|^2
		\end{array}	\right),
\end{equation}
where
\begin{equation}
	B^0 =  -\lambda^1 + 2\kappa^2|\psi^0|^2 -ia^1 \partial^*_{a^0}+ i\bar{a}^1 \partial_{a^0}.
\end{equation}


The unperturbed operator $K^0$ reduces to the operators studied previously, e.g. in \cite{TS}, where these operators are denoted by  $L^n$ with $n = 1$ and $M$ and where the spectra of the latter operators are described in details. In particular it is shown there that, in general, $K^0$ has the eigenvectors $(\psi^0, \bar{\psi}^0, 0, 0)$, $(\psi^0, \bar{\psi}^0, 0, 0)$, $(0,0,1,0)$ and $(0,0,0,1)$. The latter two are ruled out by the condition that $\lan \al \ran = 0$. We summarize the properties of $K^0$ in the following lemma.
\begin{lemma}
    $K^0$ is a nonnegative self-adjoint operator with discrete spectrum. It has a zero eigenvalue of multiplicity $2$ and the kernel is spanned by the elements
    \begin{equation} \label{v0rho}
        v_{0\rho} = (\psi^0,\rho\bar{\psi}^0,0,0), \ \quad \rho = \pm.
    \end{equation}
\end{lemma}
 Note that $v_{0\pm}\in V_{\pm}$. Hence the operator $K^{0}_+:=K^{0}|_{V_+}$  which is the zero-order approximation of the operator $K_+:=K_{b}|_{V_+}$ that we are interested in, has the simple lowest eigenvalue $0$ with the eigenfuction $v_{0+}$.
Also, the zero mode $v_{0-}$ is related to the complexified gauge zero mode $\G_b^c := \pi_- \G_b =(i\psi_b, -i\bar{\psi}_b, 0, 0)$.  Indeed, expanding the latter vector in $\e$, we obtain
\[\G_b^c = i \e  v_{0 -} + O(\e^3).\] 

\DETAILS{Now suppose that $v \perp \Gamma$. It follows that $\lan v, v_{0-} \ran_{L^2} = O(\e^2)$. We write $v = w + cv_{0-}$, with $w \perp v_{0-}$, and $c = O(\e^2)$. We then have, using $K_0 v_{0-} = 0$, that
\begin{align}
    \lan v, K_b v \ran
    &= \lan w, K_b w \ran + c \lan w, K_b v_{0-} \ran + \bar{c} \lan v_{0-}, K_b w \ran + |a|^2 \lan v_{0-}, K_b v_{0-} \ran \\
    &= \lan w, K_b w \ran + O(\e^3).
\end{align}
It therefore suffices to consider the restriction of the operator $K_b$ to the (closed) subspace consisting of elements orthogonal to $v_{0 -}$.}

By standard perturbation theory (see e.g. \cite{RSIV, HS, GS}), the spectrum of $K_{b}$ consists of eigenvalues which cluster in  $\e-$neighbourhoods of the eigenvalues of $K^{0}$ and each cluster has the same total multiplicity as the eigenvalue of $K^{0}$ it originates from. Of course, the same is true for its restriction $K_{+}:=K_b|_{V_+}$. Namely, the spectrum of $K_{+}$ consists of eigenvalues which cluster in  $\e-$neighbourhoods of the eigenvalues of $K^{0}_+:=K^{0}|_{V_+}$ and each cluster has the same total multiplicity as the corresponding eigenvalue of $K^{0}_+$. Thus $K_{+}$ has the simple eigenvector $v_{b}$, which is a perturbation of the simple eigenvector $v_{0+}$ of $K^{0}_{+}$ with the smallest eigenvalue $0$. It suffices to determine the corresponding eigenvalue of $K_{+}$, which we denote $\theta$. Since $K_{b}$ is self-adjoint and $\G_b^c$ is its eigenfunction with eigenvalue $0$, we have that $v_{b}\perp \G$. (In the leading order this becomes $v_{0+}\perp v_{0-}$ which can be verified directly.) Hence $\theta$ is the smallest eigenvalue of $K_{+}$ on the subspace orthogonal to $\G_b^c$.
%

To find $\theta$ we use the Feshbach-Schur map argument (see e.g. \cite{GS2, BFS}) with the projection $P$ given by the orthogonal projection onto $v_{0+}\in \NULL K^0_{+}$. This argument implies that \begin{equation} \label{isosp}
\lambda \in \sigma(K_{+})\ \quad \textrm{if and only if}\ \quad  \lambda \in \sigma(\mathcal{F}_P(\lambda)),
\end{equation} where, with $\bar P = 1 - P$,
\begin{equation} \label{fesh}
	\mathcal{F}_P(\lambda) := \left[P K_{+} P
		- PK_{+}\bar P( \bar P K_{+}\bar P -\lambda)^{-1}\bar P K_{+} P\right]_{\Ran P},
\end{equation}
provided the operator $\bar P K_{+}\bar P -\lambda$ is invertible on $\Ran \bar P$ and $\bar P K_{+}\bar P$ is bounded. (The latter conditions suffice for the right hand side of \eqref{fesh} to be well defined.)
Due to the relation  $\bar P K_{b}\bar P=\bar P (K_{+}- K^{0}_+)\bar P$ and the straightforward estimate
\begin{equation}  \label{estW}
	\|K_{+}- K^{0}_+\| \lesssim \e,
\end{equation}
we see that the operator $\bar P K_+\bar P$ is bounded. To show the invertibility of the operator $\bar P K_{+}\bar P -\lambda$ on $\Ran \bar P$, we note that it is the restriction of the operator $\bar Q K_+\bar Q-\lam$, where  $Q$ is the orthogonal projection onto $\NULL K^0$ and $\bar Q:=\one-Q$, to the subspace $V_+$   We know that $\sigma (\bar Q K^{0} \bar Q) \subset [\nu_0, \infty)$ for some $\nu_0 > 0$ and therefore, by standard perturbation theory 
we have that
\begin{equation} \label{specLbar}
	\sigma (\bar Q K_{b} \bar Q |_{\Ran \bar Q})\subset [c, \infty),
\end{equation}
with $c = \nu_0 + O(\e)$. Hence the self-adjoint operator $\bar Q K_{b}\bar Q -\lambda$ is invertible on $\Ran \bar Q$, provided $\lam <c$, and therefore its restriction $\bar P K_{+}\bar P -\lambda$ (to real-linear subspace $V_+$) is invertible on $\Ran \bar P$ and $\|(\bar P K_{+} \bar P - \lambda)^{-1}\| \le c^{-1}$ (again provided $\lam <c$). Hence \eqref{fesh} is well defined for $\lam <c$.

We now use $K_{+} = K^0_{+} + \e W^1  + \e^2 W^2+ o(\e^3)$, the relation $K^0_{+} P = P K^0_{+} = 0$ and the facts   $\|P K_{+} \bar P \| =O(\e)$ (by \eqref{estW}) and $\|(\bar P K_{+} \bar P - \lambda)^{-1}\| \lesssim 1$, provided $\lambda < c$ (by \eqref{specLbar}). Since we are studying the eigenvalue in $O(\e)-$ neighbourhood of $0$, we have that $\lambda = O(\e)$. Using this, we obtain
\begin{equation}\label{Fexp}
	\mathcal{F}(\lambda) = \e\mathcal{F}_1
		+ \e^2 \mathcal{F}_2+ O(\e^3),
\end{equation}
where
\begin{equation}\label{F1F2} \mathcal{F}_1 :=\lan v_{0+}, W^1 v_{0+}\ran/\lan |v_{0+}|^2 \ran_\cL,\ \mathcal{F}_2:= \lan v_{0+}, [W^2 -	 W^1\bar{P}(\bar{P}K^{0}_{+}\bar{P})^{-1}\bar{P}W^1]v_{0+}\ran/\lan |v_{0+}|^2 \ran_\cL.
\end{equation}
The operator, $W_1$ is explicitly given by \eqref{W0}. This expression implies that  $\lan v_{0+}, W_1 v_{0+} \ran = 0$ and therefore 
\begin{equation}\label{F1}
	\mathcal{F}_1 := \lan v_{0+}, W^1 v_{0+}\ran = 0.
\end{equation}	

We now turn to the $\e^2$ order operator, $\mathcal{F}_2$.  
\begin{lemma}\label{lem:F2}
    \begin{equation}    \label{F2}
        \cF_{2} 
	=2 [(2\kappa^2 - 1)\beta(\tau) + 1]. 
    \end{equation}
\end{lemma}
\begin{proof}
    We begin with $\lan v_{0+}, W^2 v_{0+}\ran$. We first note that $W^2$ and $v_{0\rho}$ are explicitly given by \eqref{W1} and \eqref{v0rho}. Using the fact that $\partial_{a^0}^*\psi^0 = 0$, we calculate that
    \begin{equation}    \label{F2firstterm}
        \lan v_{0\rho}, W^2  v_{0\rho}\ran =2 \left( -\lambda_1\lan |\psi^0|^2\ran_\cL + 2\kappa^2\lan|\psi^0|^4\ran_\cL +\Re\lan\bar\psi^0 i\bar{a}^1\partial_{a^0}\psi^0\ran_\cL+\rho\kappa^2\lan|\psi^0|^4\ran_\cL \right),
	\end{equation}
where $\lam_1$ is defined by the expansion $\lambda_b = 1 + \e^2 \lambda^1 + O(\e^4)$ and, due to the definition of $\lam_b$ and $\e$, is equal to
 \begin{equation} \label{relationaver}\lambda^1=\left[\frac{1}{2}  + (\kappa^2-{1\over 2}) \beta(\tau)\right]\langle |\psi^{0} |^2 \rangle_\cL. \end{equation}
    We note that $-\Delta = \bar{\partial}^* \bar{\partial}$. Using the identities \eqref{a1eq} and \eqref{curla1} we obtain
    \begin{equation}    \label{F2identity}
        2\Re \lan i\bar{a}^1 \bar{\psi}^{0} \partial_{a^0} \psi^{0} \ran_\cL
        = 4\lan \bar{a}^1\Delta a^1 \ran_\cL
        = -4\lan |\bar{\partial} a^1|^2 \ran_\cL
        = - \lan |\psi^0|^4 \ran_\cL + \lan |\psi^0|^2 \ran_\cL^2.
	\end{equation}
    The equations \eqref{F2firstterm} and \eqref{F2identity}
     and the relation \eqref{relationaver} 
     give 
	\begin{equation}  \label{W2matrelem}
	   \lan v_{0\rho}, W^2 v_{0\rho} \ran = 2(1 + \rho)\kappa^2 \lan |\psi^0|^4 \ran_\cL.
	\end{equation}

    To compute the second term in $\cF_2$ we note that $\bar{P}W^1 P = W^1 P$, and use \eqref{K0}, we calculate
    \begin{equation}    \label{F2second}
    \lan v_{0\rho}, W^{1} \bar{P} (\bar{P} K^{0 }_{+} \bar{P})^{-1} \bar{P} W^1 v_{0\rho} \ran
        = -2\Re \lan  \bar{\psi}^{0} (\partial_{a^0}\psi^0) \Delta^{-1} ( \psi^{0} (\overline{\partial_{a^0}\psi^0}) ) \ran_\cL.
    \end{equation}
    Again using the identities \eqref{a1eq} and \eqref{curla1} and the fact that  $-\Delta = \bar{\partial}^* \bar{\partial}$, we obtain
	\begin{equation}    \label{F2secondterm}
     \begin{split}   -2\Re \lan \bar{\psi}^{0} (\partial_{a^0}\psi^0) \Delta^{-1}
	 			( \psi^{0} (\overline{\partial_{a^0}\psi^0}) ) \ran_\cL
        &= -8 \Re \lan \bar{a}^1 \Delta a^1 \ran_\cL= 8 \lan |\bar{\partial} a^1|^2\ran_\cL\\
        &
        = 2\lan |\psi^0|^4 \ran_\cL - 2\lan |\psi^0|^2 \ran_\cL^2.
\end{split}	\end{equation}
The second equation in \eqref{c0vac} and the fact that $\lan |v_{0\rho}|^2 \ran_\cL = 2\lan |\psi^0|^2 \ran_\cL$ now gives \eqref{F2}.
\end{proof}

The equations \eqref{isosp}, \eqref{Fexp}, \eqref{F1} and \eqref{F2} imply the first part of Preposition \ref{prop:qfKbnd}.
Finally, 
 by the definition, $\theta$   is also the lowest eigenvalue of $L_b^{\textrm{resc}}$ on $\G_1^{\perp}$ and therefore, by the formula relating $L_b^{\textrm{resc}}$ to $L_b$, we see that $\theta b$ is the smallest eigenvalue of $L_b$ on $\G_b^{\perp}$.
\end{proof}

Now we are ready for
\begin{proof}[Proof of Theorem \ref{thm:positivity}]
We restore the subindex $L^2$ in the inner products and norms. Preposition \ref{prop:qfKbnd} implies the first statement of the theorem and the estimate
\begin{equation}\label{qfKbnd}
        \lan v, K_b v \ran_{L^2} \geq  \theta  \|v\|_{L^2}^2.
    \end{equation}
  for all $v \in D(K_b)$, such that $v \perp \Gamma_b^c$. The latter estimate gives,  for all $v \in D(L_b)$, such that $v \perp \Gamma_b$,
\begin{equation}\label{qfLbnd}
        \lan v, L_b v \ran_{L^2} \geq \frac{1}{2}\theta  \|v\|_{L^2}^2.
           \end{equation}
    We upgrade now the lower bound on $\lan v, L_b v \ran_{L^2}$ to that on $\lan v, L_b v \ran_{H^1}$. (We could have done with the operator $K_b$ as well.) We begin with
\begin{lemma}\label{lem:H1-bound}
    For all $v \in \Per_b$, if $v \perp \Gamma_b$, then
    \begin{equation}    \label{eq:H1-bound}
\frac{1}{2} \|v\|_{H^1}^2 - C \|v\|_{L^2}^2 \le  \langle v, L_b v \rangle_{L^2} \lesssim \|v\|_{H^1}^2,
    \end{equation}
    for some positive constant $C$.
\end{lemma}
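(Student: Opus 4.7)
The plan is to compute $\langle v, L_b v\rangle_{L^2}$ explicitly as an integral over a fundamental cell $\Om_b$ and then bound it above and below in terms of $\|v\|_{H^1}^2$ and $\|v\|_{L^2}^2$ by elementary Cauchy--Schwarz and Young inequalities. I would work with the shifted Hessian $\tilde L_b$ from Section \ref{sec:positivity}, which induces the same quadratic form as $L_b$, and integrate by parts on $\Om_b$. The quasi-periodicity of $\xi$ relative to $\Psi_b$ and the periodicity of $\alpha$ and $A_b - A_b^0$ make all boundary contributions cancel. Using $\DIV\alpha = 0$ to replace $\int|\CURL\alpha|^2$ by $\int|\nabla\alpha|^2$ and combining the two cross terms into one (as observed in Section \ref{sec:positivity}), the quadratic form becomes
\begin{equation*}
\langle v, L_b v\rangle_{L^2} = \frac{1}{|\Om_b|}\int_{\Om_b}\Big[|\nabla_{A_b}\xi|^2 + \kappa^2(2|\Psi_b|^2-1)|\xi|^2 + \kappa^2\Re(\Psi_b^2\bar\xi^2) + |\nabla\alpha|^2 + |\Psi_b|^2|\alpha|^2 - 4\alpha\cdot\Im(\bar\xi\,\nabla_{A_b}\Psi_b)\Big].
\end{equation*}

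For the upper bound I use smoothness and gauge-periodicity of $(\Psi_b, A_b)$: on $\Om_b$ the quantities $\Psi_b$, $A_b - A_b^0$, and $\nabla_{A_b}\Psi_b$ are uniformly bounded. Cauchy--Schwarz then controls every term in the formula above by $\|\xi\|_{L^2}^2 + \|\nabla_{A_b}\xi\|_{L^2}^2 + \|\alpha\|_{L^2}^2 + \|\nabla\alpha\|_{L^2}^2$. Expanding $\nabla_{A_b}\xi = \nabla_{A_b^0}\xi - i(A_b - A_b^0)\xi$ and using boundedness of $A_b - A_b^0$ gives the equivalence $\|\nabla_{A_b}\xi\|_{L^2}^2 \leq 2\|\nabla_{A_b^0}\xi\|_{L^2}^2 + C\|\xi\|_{L^2}^2$, and $\langle v, L_b v\rangle_{L^2} \lesssim \|v\|_{H^1}^2$ follows.

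For the lower bound I keep the positive principal pieces $\|\nabla_{A_b}\xi\|_{L^2}^2 + \|\nabla\alpha\|_{L^2}^2 + \||\Psi_b|\,\alpha\|_{L^2}^2$ and estimate the remaining terms. The indefinite terms in $|\xi|^2$ satisfy $|\kappa^2\int(2|\Psi_b|^2 - 1)|\xi|^2| + |\kappa^2\int\Re(\Psi_b^2\bar\xi^2)| \leq C\|\xi\|_{L^2}^2$, while the cross term obeys $4|\int\alpha\cdot\Im(\bar\xi\nabla_{A_b}\Psi_b)| \leq \varepsilon\|\alpha\|_{L^2}^2 + C_\varepsilon\|\xi\|_{L^2}^2$ by Young's inequality. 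Using the reverse of the covariant-gradient equivalence, $\|\nabla_{A_b^0}\xi\|_{L^2}^2 \leq (1+\varepsilon)\|\nabla_{A_b}\xi\|_{L^2}^2 + C_\varepsilon\|\xi\|_{L^2}^2$, and taking $\varepsilon$ sufficiently small yields
\begin{equation*}
\langle v, L_b v\rangle_{L^2} \geq \tfrac{1}{2}\|v\|_{H^1}^2 - C\|v\|_{L^2}^2.
\end{equation*}

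I expect no real obstacle: this is a standard G\aa{}rding-type inequality. The orthogonality $v\perp\Gamma_b$ is not actually used in this lemma --- it will enter at the next step, where the $L^2$ spectral gap $\theta$ of Theorem \ref{thm:positivity} is combined with the above bound to upgrade it to the sharper inequality \eqref{L-low-bound}. The only point requiring care here is reconciling the covariant derivative $\nabla_{A_b}$ appearing naturally in the quadratic form with the covariant derivative $\nabla_{A_b^0}$ used in the definition of the $H^1$ norm, which is harmless because the smooth periodic correction $A_b - A_b^0$ is bounded.
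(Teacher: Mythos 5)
Your proposal is correct and follows essentially the same route as the paper: both compute the quadratic form of the shifted Hessian explicitly on a fundamental cell, split $A_b = A_b^0 + P$ with $P$ bounded, and absorb the indefinite and cross terms via Cauchy--Schwarz/Young with a small parameter into the positive gradient pieces. Your observation that the hypothesis $v \perp \Gamma_b$ is not actually used is also consistent with the paper, whose proof never invokes it.
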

\begin{proof}
    We write $v = (\xi, \alpha)$ and also $A^\tau_b = A^0_b + P$. For convenience we simplify the notation in the following calculations. Integrating by parts we obtain
    \begin{align}\label{Lb-qf}
\langle v, L_b v \rangle_{L^2}    \nonumber
&= \frac{1}{|\Omega|} \Re \int_\Omega -\bar{\xi}\COVLAP{A^0_b}\xi + 2iP \cdot \bar{\xi}\COVGRAD{A^0_b}\xi + |P|^2|\xi|^2 - \kappa^2|\xi|^2 + (2\kappa^2 + \frac{1}{2})|\Psi_b|^2|\xi|^2 \\ \nonumber
            &  + (\kappa^2 - \frac{1}{2})\Psi_b^2\bar{\xi}^2
            + 2i\alpha\cdot(\bar{\xi}\COVGRAD{A^0_b}\Psi_b) + 2(\alpha\cdot P)\Psi_b\bar{\xi}
            - \alpha \cdot \Delta \alpha + |\Psi_b|^2|\alpha|^2 \\ \nonumber
            &    - 2 \alpha \cdot \Im(\bar{\xi}\COVGRAD{A^0_b}\Psi) - 2(\alpha\cdot P)\Re(\bar{\xi}\Psi_b) \\ \nonumber
        &= \frac{1}{|\Omega|} \int_\Omega |\COVGRAD{A^0_b}\xi|^2 + |\nabla \alpha|^2 + |P|^2|\xi|^2 - \kappa^2|\xi|^2 + (2\kappa^2 + \frac{1}{2})|\Psi_b|^2|\xi|^2 \\
            & + (\kappa^2 - \frac{1}{2})\Psi_b^2\bar{\xi}^2
            + |\Psi_b|^2|\alpha|^2 - 2P \cdot \Im(\bar{\xi}\COVGRAD{A^0_b}\xi)
            - 4 \alpha \cdot \Im(\bar{\xi}\COVGRAD{A^0_b}\Psi_b).
   \end{align}
     Using this expression and the estimate obtained with the help of the Schwarz inequality
        \begin{align}\label{cross-term-est}
\left| \int_\Omega P \cdot \Im(\bar{\xi}\COVGRAD{A^0_b}\xi) \right|
&\leq \|P\|_\infty \|\xi\|_{L^2} \|\COVGRAD{A^0_b}\xi\|_{L^2}
\leq \frac{1}{2} \|P\|_\infty \left( \frac{1}{r} \|\xi\|_{L^2}^2 + r \|\COVGRAD{A^0_b}\xi\|_{L^2}^2 \right),
    \end{align}
for any $r>0$, we obtain
  \begin{align*}
        \langle v, L_b v \rangle_{L^2}
                &\ge \|\COVGRAD{A^0_b}\xi\|_{L^2}^2 + \|\nabla \alpha\|_{L^2}^2
            - \kappa^2\|\xi\|_{L^2}^2 \\  &
             - \frac{C}{r} \|\xi\|_{L^2}^2 - r C \|\COVGRAD{A^0_b}\xi\|_{L^2}^2
            - C\|\alpha \|_{L^2}^2- C\|\xi\|_{L^2}^2.
    \end{align*}
  \DETAILS{    This means that
    \begin{align*}
        \int_\Omega -2P \cdot \Im(\bar{\xi}\COVGRAD{A^0_b}\xi)
        \geq -\frac{C}{r^2} \|v\|_{L^2}^2 - r^2 C \|v\|_{H^1}^2.
    \end{align*}}
  Now we choose $r$ as $r=\frac{1}{2C}$ so that we arrive at the lower bound in  \eqref{eq:H1-bound}. To obtain the upper bound we use \eqref{Lb-qf} and \eqref{cross-term-est} again and the fact that $\|\Psi_b\|_\infty,\ \|P\|_\infty< \infty$. 
\end{proof}
    Let now $\delta \in [0,1]$ be arbitrary. Then using \eqref{qfKbnd} 
    and \eqref{eq:H1-bound}
    \begin{align*}
        \lan v, L_b v \ran_{L^2}
        &= (1 - \delta) \lan v, L_b v \ran_{L^2} + \delta \lan v, L_b v \ran_{L^2} \\
        &\geq (1 - \delta) \theta \|v\|_{L^2}^2 + \delta (\frac{1}{2}\|v\|_{H^1}^2 - C \|v\|_{L^2}^2) \\
        &= ((1 - \delta) \theta - \delta C ) \|v\|_{L^2}^2 + \frac{\delta}{2}  \|v\|_{H^1}^2.
    \end{align*}
    \eqref{L-low-bound} now follows by choosing $\delta = \frac{\theta}{\frac{1}{2} +\theta + C}$.
Next, the 
estimate \eqref{L-upp-bound} follows from the upper bound in  \eqref{eq:H1-bound}.    
\end{proof}


\appendix

\section{Proof of orbital stability}

We now assume that we have the case $\kappa^2 > \frac{1}{2}(1 - \frac{1}{\beta(\tau)})$ and prove the orbital stability of the Abrikosov lattice $u^\tau_b$. This is a weaker statement than the asymptotic stability which we have already proven but it requires rougher analysis. We follow \cite{G}.
\begin{theorem}\label{thm:orb-stability}
    For all $b$ sufficiently close to $\kappa^2$, the Abrikosov lattice $u^\tau_b$ is orbitally stable under gauge-periodic perturbations if $\kappa^2 > \frac{1}{2}(1 - \frac{1}{\beta(\tau)})$.
\end{theorem}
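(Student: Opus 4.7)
The plan is to follow the classical Grillakis--Shatah--Strauss-style energy/coercivity argument, leveraging that \eqref{GES} is (essentially) a gradient flow for $\mathcal{E}$ and that the Hessian is coercive modulo the gauge zero mode in the regime $\kappa^2 > \frac{1}{2}(1 - \frac{1}{\beta(\tau)})$. First, I would fix $\delta_0$ from Proposition \ref{prop:decomposition}. For any $C^1$ solution $u(t) = (\Psi(t), A(t))$ of \eqref{GES} with $u(0) \in U_{\delta/2}$ (with $\delta \le \delta_0$ to be chosen), local existence gives a maximal $T_* > 0$ such that $u(t) \in U_{\delta_0}$ on $[0, T_*)$. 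On this interval Proposition \ref{prop:decomposition} provides a modulation $\gamma(t) \in \R$ with $T_{\gamma(t)} u(t) = u_b + v(t)$ and $v(t) \perp \Gamma_b$ in the $L^2$ inner product.

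The two ingredients I would combine are: (i) the energy $\mathcal{E}$ is $T_\gamma$-invariant, so $\mathcal{E}(u(t)) = \mathcal{E}(u_b + v(t))$; and (ii) the energy is non-increasing along the flow (Lemma \ref{lem:decreasing-energy}), so $\mathcal{E}(u_b + v(t)) \le \mathcal{E}(u_b + v(0))$. Since $u_b$ is a critical point ($\mathcal{E}'(u_b) = 0$) and $\mathcal{E}$ is a polynomial of degree four in the fields, Taylor expansion gives
\begin{equation*}
\mathcal{E}(u_b + v) - \mathcal{E}(u_b) = \tfrac{1}{2} \langle v, L_b v \rangle_{L^2} + R(v),
\end{equation*}
where, using Sobolev embedding $H^1(\Omega_b) \hookrightarrow L^p$ for all $p < \infty$ in two dimensions, $|R(v)| \lesssim \|v\|_{H^1}^3 + \|v\|_{H^1}^4$. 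The upper bound \eqref{L-upp-bound} of Theorem \ref{thm:positivity} also yields $\mathcal{E}(u_b + v) - \mathcal{E}(u_b) \lesssim \|v\|_{H^1}^2$ for small $v$.

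Combining monotonicity of $\mathcal{E}$ with the lower bound \eqref{L-low-bound} of Theorem \ref{thm:positivity} (applicable since $v(t) \perp \Gamma_b$), I obtain
\begin{equation*}
\tfrac{1}{2} c \theta \|v(t)\|_{H^1}^2 - C\|v(t)\|_{H^1}^3(1 + \|v(t)\|_{H^1}) \le \mathcal{E}(u_b + v(t)) - \mathcal{E}(u_b) \le C' \|v(0)\|_{H^1}^2.
\end{equation*}
A standard continuity/bootstrap argument finishes the proof: given $\varepsilon > 0$, pick $\delta > 0$ so small that whenever $\|v(0)\|_{H^1} < \delta$, the above inequality on $[0, T_*)$ forces $\|v(t)\|_{H^1} < \min(\varepsilon, \delta_0/2)$ (which is possible because, for small $\|v(t)\|_{H^1}$, the cubic term is absorbed into half of the quadratic term). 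This strict inequality shows $u(t)$ cannot exit $U_{\delta_0}$ in finite time, so $T_* = \infty$, and the bound $\|v(t)\|_{H^1} < \varepsilon$ persists for all $t$, giving orbital stability.

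The main technical point to verify carefully is the bootstrap: one must absorb the cubic remainder $R(v)$ into the coercive quadratic term uniformly in $t$, using that $\|v(t)\|_{H^1}$ stays in a small fixed neighborhood of $0$. This is purely algebraic once the coercivity estimate \eqref{L-low-bound} and gauge invariance of $\mathcal{E}$ are in hand, so no genuinely new difficulty arises beyond what is needed for asymptotic stability; indeed orbital stability is strictly weaker, as the author notes, and the argument does not require the more delicate parabolic/Lyapunov differential inequality used in Lemma \ref{lem:v-control'}.
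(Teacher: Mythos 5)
Your proposal is correct and follows essentially the same route as the paper's Appendix: the modulation decomposition from Proposition \ref{prop:decomposition}, gauge invariance plus monotonicity of $\E$ (Lemma \ref{lem:decreasing-energy}), the Taylor expansion of $\E$ about $u_b$ with a cubic-plus-quartic remainder controlled via Sobolev embedding, the coercivity bound \eqref{L-low-bound} on $\{v \perp \Gamma_b\}$, and a final continuity argument on $N(t) = \|v(t)\|_{H^1}$. No substantive differences to flag.
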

\begin{proof}  As in the main text, we consider $\tau$ fixed and do not display it in the notation.
We will require a series of lemmas.
\begin{lemma}
    There exists positive constants $c$ and $C$, such that for all $v \in \Per_b$, if $v \perp \Gamma_b$, then for any $\theta'<\theta$,
    \begin{equation}    \label{lyap-bound}
    \theta'\|v\|_{H^1}^2 - c\|v\|_{H^1}^3 - c\|v\|_{H^1}^4 \leq \E(u_b + v) - \E(u_b) \leq C(\|v\|_{H^1}^2 + \|v\|_{H^1}^3 + \|v\|_{H^1}^4).
    \end{equation}
\end{lemma}
\begin{proof}
    Using the Taylor expansion of $\E$, together with the fact that $\E'(u_b) = 0$ and $v \perp \Gamma_b$, we have
    \begin{align*}
        \E(u_b + v) - \E(u_b) = \frac{1}{2} \lan v, L_b v \ran + R(v),
    \end{align*}
    where the remainder $R(v)$ is given by, setting $v = (\xi, \alpha)$,
    \begin{align*}
        R(v) = \int_{\Omega_b} |\alpha|^2(\Re(\bar{\Psi}_b \xi) + \frac{1}{2} |\xi|^2) - \alpha\cdot\Im(\bar{\xi}\COVGRAD{A_b}\xi) + \kappa^2|\xi|^2(\Re(\bar{\Psi}_b \xi) + \frac{1}{4} |\xi|^2).
    \end{align*}
    Using the Cauchy-Schwarz and Sobolev inequalities it is straightforward to show $|R(v)| \leq c(\|v\|_{H^1}^3 + \|v\|_{H^1}^4)$. Using \eqref{L-low-bound}, this gives
    \begin{align*}
        \E(u_b + v) - \E(u_b) \geq \theta\|v\|_{H^1}^2 - c\|v\|_{H^1}^3 - c\|v\|_{H^1}^4.
    \end{align*}

    On the other hand, by the definition \eqref{L} of $L_b$ and the boundedness of $u^\tau_b$ together with its derivatives, we have $\lan v, L_b v \ran \lesssim \|v\|_{H^1}^2$. This estimate together with the above estimate of $|R(v)|$ gives the upper bound in \eqref{lyap-bound}
        and this completes the proof.
\end{proof}

\begin{lemma}\label{lem:decreasing-energy}
    Suppose that $(\Psi, A, \Phi)$ is a solution of the Gorkov-Eliashberg-Schmidt equations \eqref{GES} on the time interval $[0,T]$ satisfying (I) - (III) and $u(t) = (\Psi(t), A(t)) \in C^1([0,T]; U_\delta)$ for any $\delta > 0$ and $T > 0$. Then the energy function $\E(u)$ is a nonincreasing function in time.
\end{lemma}
\begin{proof}
    Note that we have $\partial_t u \in \Per_b$. Using the gradient-flow form of the equations \eqref{grad-flow}, we see that
    \begin{align*}
        \partial_t \E(u)
        &= \lan \E'(u), \partial_t u \ran_{L^2} \\
        &= - \lan \E'(u), (i\Phi\Psi, \nabla\Phi) \ran_{L^2} - \| \E'(u) \|_{L^2}^2 \\
        &\leq \int_{\Omega_b} \Im(\Phi\bar{\Psi}\COVLAP{A}\Psi) - \Re(i\kappa^2(1 - |\Psi|^2)|\Psi|^2\Phi) - \CURL^*\CURL A \cdot \nabla\Phi + \Im(\bar{\Psi}\COVGRAD{A}\Psi)\cdot\nabla\Phi \\
        &= \int_{\Omega_b} -\Im(\bar{\Psi}\COVGRAD{A}\Psi)\cdot\nabla\Phi - \Im(\Phi|\COVGRAD{A}\Psi|^2) + \Im(\bar{\Psi}\COVGRAD{A}\Psi)\cdot\nabla\Phi \\
        &= 0,
    \end{align*}
    where we used the fact that $\Phi$ is real-valued and $\DIV \CURL^* = 0$.
\end{proof}

\begin{lemma}\label{lem:v-control}
Let  $(\Psi(t), A(t), \Phi(t))$ be a solution of the Gorkov-Eliashberg-Schmidt equations \eqref{GES} on the time interval $[0,T]$ and denote $u(t) = (\Psi(t), A(t))$.    Given $\e > 0$ sufficiently small, there exists $\delta > 0$ such that if $T_{\gamma(t)}u(t) = u_b + v(t)$, $\gamma(t) \in \R$, $v \in C^1([0,T]; \Per_b)$, $v(t) \perp \Gamma_b$, and $\|v(0)\|_{H^1} < \delta$, then $\|v(t)\|_{H^1} \leq \e$ for all $t \in [0,T]$.
\end{lemma}
\begin{proof}
    We set $N(t) = \|v(t)\|_{H^1}$. Using the inequalities \eqref{lyap-bound} and Lemma \ref{lem:decreasing-energy} we have
    \begin{align*}
    C_1 N(t)^2 - C_2 N(t)^3 - C_3 N(t)^4
    &\leq \E(u_b + v(t)) - \E(u^\tau_b)= \E(T_{\gamma(t)} u(t)) - \E(u_b) \\
    &= \E(u(t)) - \E(u_b)\leq \E(u(0)) - \E(u_b) \\
    &= \E(T_{\gamma(0)}u(0)) - \E(u_b)= \E(u_b + v(0)) - \E(u_b) \\
    &\leq  C_4 N(0)^2 + C_5 N(0)^3 + C_6 N(0)^4.
    \end{align*}
    Now there exists $\delta_0 > 0$ such that if the left-hand-side $C_1 N(t)^2 - C_2 N(t)^3 - C_3 N(t)^4 \leq \delta_0$, then either $0 \leq N(t) \leq \e$ or $N(t) \geq \e'$ for some $e' > \e$. We can choose $\delta$ sufficiently small so that the right-hand-side $C_4 N(0)^2 + C_5 N(0)^3 + C_6 N(0)^4 \leq \delta_0$. The result then follows from the continuity of $N(t)$.
\end{proof}

We can now prove the following proposition, which implies Theorem \ref{thm:orb-stability}.
\begin{proposition}
    For any $\e > 0$, there exists $\delta > 0$ such that if $(\Psi, A, \Phi)$ is a $C^1$ solution of the Gorkov-Eliashberg-Schmidt equations \eqref{GES} and $u(t) = (\Psi(t), A(t))$ satisfies $u(0) \in U_\delta$, then $u(t) \in U_\e$ for all $t \geq 0$.
\end{proposition}
\begin{proof}
    Let $\delta_0$ be given by Proposition \ref{prop:decomposition}. If $u(0) \in U_{\frac{1}{2}\delta_0}$, there is $T > 0$ such that $u(t) \in U_{\delta_0}$ for $t\le T$. Then by Proposition \ref{prop:decomposition}, there is $\gamma(t) \in \R$ so that $T_{\gamma(t)}u(t) = u_b + v(t)$, with $v(t) \in C^1([0,T]; \Per_b)$ and $v(t) \perp \Gamma_b$. By Lemma \ref{lem:v-control} we can then find $\delta_1$ such that $\|v(t)\| \leq \frac{1}{2}\min(\e, \delta_0)$ for all $t \in [0,T]$.

    Now let $\delta = \min(\delta_0, \delta_1)$. It follows that if $u(0) \in U_{\frac{1}{2}\delta}$, then $u(t)$ cannot leave $U_\e \cap U_{\delta_0}$, and that proves the proposition.
\end{proof}
This finishes the proof of  Theorem \ref{thm:orb-stability}.
\end{proof}
\bigskip



\end{document}